\algnewcommand{\algorithmicor}{\textbf{ or }}
\algnewcommand{\OR}{\algorithmicor}
\newtheorem{defn}{Definition}
\newtheorem{thm}{Theorem}
\newtheorem{cor}{Corollary}
\newtheorem{remark}{Remark}
\newtheorem{claim}{Claim}
\newtheorem{example}{Example}
\def\Ddots{\mathinner{\mkern1mu\raise\p@
		\vbox{\kern7\p@\hbox{.}}\mkern2mu
		\raise4\p@\hbox{.}\mkern2mu\raise7\p@\hbox{.}\mkern1mu}}
\begin{document}
	\pagenumbering{arabic}
	\title{Coded Caching for Combinatorial Multi-Access Hotplug Networks from $t$-Designs}

	\author{\IEEEauthorblockN{Dhruv Pratap Singh}
	\IEEEauthorblockA{
		\textit{Dept. of ECE, Indian Inst. of Sc.}\\
		Bangalore, KA, India \\
		dhruvpratap@iisc.ac.in}
	\and
	\IEEEauthorblockN{Anjana A. Mahesh}
	\IEEEauthorblockA{
		\textit{Dept. of EE, Indian Inst. of Tech.}\\
		Hyderabad, Telangana, India \\
		anjana.am@ee.iith.ac.in}
	\and
	\IEEEauthorblockN{B. Sundar Rajan}
	\IEEEauthorblockA{
		\textit{Dept. of ECE, Indian Inst. of Sc.}\\
		Bangalore, KA, India \\
		bsrajan@iisc.ac.in}}
\date{}
\maketitle
\begin{abstract}
We study hotplug coded caching in combinatorial multi-access networks, which generalizes existing hotplug coded caching models by allowing users to access multiple caches, while only a subset of caches is online during the delivery phase. We first generalize the Hotplug Placement Delivery Array (HpPDA) framework to the combinatorial multi-access setting. Based on this generalized framework, we propose a $t$-design-based coded caching scheme for combinatorial multi-access networks. We characterize a class of design parameters under which every active user has access to a sufficient number of coded subfiles to decode its requested file, and show that appropriate parameter choices allow for the elimination of redundant multicast transmissions. As a result, the proposed scheme achieves a family of rate–memory trade-offs with flexible subpacketization. We present numerical comparisons illustrating that the proposed $t$-scheme outperforms existing hotplug coded caching schemes in certain memory regimes \cite{MT,RR1,CRR}.
\end{abstract}
\section{Introduction}
To mitigate peak-time congestion in communication networks, caching has been proposed as a means to shift traffic from peak to off-peak periods by storing content closer to end users. In the coded caching framework introduced by Maddah-Ali and Niesen \cite{MAN}, cached content is further exploited to create multicast coding opportunities, thereby reducing the communication load. The coded caching system operates in two phases: a \textit{placement phase}, during which the server populates the caches without knowledge of future user requests, and a \textit{delivery phase}, during which the server transmits coded messages to satisfy the user demands. The seminal work in \cite{MAN} considers a centralized system with a server storing $N$ files and serving $K$ users, each equipped with a dedicated cache of size $M\leq N$ files. Under the assumption of uncoded cache placement, this scheme was shown to be optimal for $N\geq K$ in \cite{WTP}.

Following the landmark work on coded caching, the framework has been extended in several directions, including schemes with reduced subpacketization \cite{YCTC}, decentralized coded caching \cite{MAN1}, hierarchical coded caching \cite{KNMAD}, coded placement \cite{CFL}, multi-access coded caching \cite{MKR,HKD}, and coded caching with private demands \cite{WC}, among others. Among these, the combinatorial multi-access coded caching setting introduced in \cite{MKR} considers systems in which users access cache memories external to them, with each user connected to a fixed subset of caches. Under the assumption of uncoded placement, the achievable rate for this scheme was shown to be optimal in \cite{BE}.

Another important direction of the coded caching framework is the hotplug coded caching setting \cite{MT}, which relaxes the assumption that all users are active during the delivery phase. In practical systems, some users may be offline or fail to convey their demands at the time of delivery, whereas the original MAN scheme \cite{MAN} and subsequent works assume full user participation. In the hotplug model, the system consists of $K$ users, but only an unknown subset of $K^{\prime}$ users is active during the delivery phase. While the server is aware of the value of $K^{\prime}$ during the placement phase, it does not know the identities of the active users, and the cache placement must therefore be designed to satisfy any subset of $K^{\prime}$ users. The authors of \cite{MT} proposed a hotplug coded caching scheme based on the MAN framework \cite{MAN} using Maximum Distance Separable (MDS) coded placement. This line of work was further developed in \cite{RR1}, where a combinatorial structure known as the Hotplug Placement Delivery Array (HpPDA) was introduced and constructions based on $t$-designs were proposed. Further improvements in certain memory regimes were reported in \cite{CRR}.

Existing works on hotplug coded caching assume a dedicated caching architecture in which each user is associated with a single cache memory. In contrast, practical cache-aided networks often consist of multiple caches deployed across the network, where users may simultaneously access several caches. Moreover, due to failures, maintenance, or connectivity constraints, some caches may be unavailable during the delivery phase. 

Motivated by this, we study a hotplug coded caching model under a combinatorial multi-access framework. A server with $N$ files connects to $K$ users and $C$ caches via an error-free wireless link. Each user connects to a distinct $r$-subset of the $C$ caches, leading to $K=\binom{C}{r}$ users. Of the $C$ caches, an unknown subset of $C^{\prime}$ caches are online during the delivery phase. While the server is aware of the number of active caches, it does not know their identities at the time of placement, and the cache contents must therefore be designed to accommodate any such active subset. To the best of our knowledge, the combinatorial multi-access hotplug coded caching setting has not been previously investigated.

\textit{Our Contributions}: The main contributions of this paper are summarized as follows

\begin{itemize}
\item We formulate the combinatorial multi-access hotplug coded caching setting, in which users access a fixed subset of caches and only an unknown subset of caches is online during the delivery phase.

\item We generalize the Hotplug Placement Delivery Array (HpPDA) framework to the combinatorial multi-access setting by employing a structured family of placement and delivery arrays.

\item Using this generalized framework, we design a coded caching scheme with MDS-coded placement, leveraging $t$-designs, for the considered hotplug multi-access setting and characterize its achievable rate-memory trade-off.

\item We show that appropriate choices of design parameters allow the elimination of redundant multicast transmissions while preserving correct decoding for all active users.

\item We show that, for specific choices of system parameters, the proposed scheme specializes to the hotplug coded caching scheme in \cite{CRR}.

\item We present numerical comparisons illustrating that the proposed $t$-scheme outperforms existing hotplug coded caching schemes \cite{MT,RR1,CRR} in certain memory regimes.
\end{itemize}

\textit{Organization of the paper}: Section \ref{systemmodelandprelims} introduces the system model and relevant preliminaries. The main results are stated in Section \ref{mainresults}. Section \ref{proposedscheme} presents a generalized Hotplug Placement Delivery Array (HpPDA) framework and the associated placement and delivery procedure for combinatorial multi-access hotplug caching, while Section \ref{ghppdatdesign} details the combinatorial multi-access hotplug coded caching construction based on $t$-designs. Numerical comparisons are provided in Section \ref{numericalcomparisons}. Section \ref{conclusionsandfuturework} concludes the paper.

\textit{Notations}: For a positive integer $n$, the set $\{1,2,\cdots,n\}$ is denoted by $[n]$, and for positive integers $a,b$ such that $a<b$, the set $\{a,a+1,\cdots,b\}$ is denoted as $[a,b]$. For two sets $A$ and $B$, the notation $A\setminus B$ denotes the set of
elements in $A$ which are not in $B$. For a set $A$, the number of elements in it is represented by $|A|$. The binomial coefficient $\binom{n}{k}$ is equal to $\frac{n!}{k!(n-k)!}$, where $n$ and $k$ are positive integers such that $k \leq n$. For a set $S$ and a positive integer $t$, the notation $\binom{S}{t}$ denotes the set of all $t$-subsets of $S$.
\section{System Model and Preliminaries}
\label{systemmodelandprelims}
\subsection{System Model}
Consider a system consisting of a central server storing $N$ files of $B$ bits each, denoted by $\{W_1,W_2,\cdots,W_{N}\}$. The server is connected to the $K\leq N$ users via an error-free broadcast link. There are $C$ caches in the system, each with a memory of $M\leq N$ files or equivalently $MB$ bits. 

Users connect to these caches such that every $r$-subset of the $C$ caches is accessed by a distinct user, via an infinite capacity error-free wireless link. Consequently, there are $K=\binom{C}{r}$ users in the system, and each user can be uniquely indexed by the subset of caches it accesses. This system operates in two phases:

\textit{Placement Phase}: During the placement phase, the server populates the caches subject to their memory constraints, without knowledge of the future user demands or the identities of the caches that will be online during the delivery phase. However, the server is assumed to know that exactly $C^{\prime}$ caches will be online during the delivery phase. The contents of cache $a$ are denoted by $Z_a$ and the number of subfiles each file is partitioned into defines the subpacketization of the scheme.

\textit{Delivery Phase}: During the delivery phase, the set of online caches is denoted by $I\in\binom{[C]}{C^{\prime}}$, where $I$ is a $C^{\prime}$-subset of the $C$ caches. A user $U$ is active during delivery if $U\cap I\not=\emptyset$, that is, user $U$ connects to at least one online cache. Users satisfying $U\cap I=\emptyset$ are not served in this delivery phase and are treated separately, and hence are not considered part of the system during delivery. The demands of the active users are encapsulated in the demand vector $\mathbf{D}(I)=(d_U:U\in\binom{[C]}{r},U\cap I\not=\emptyset),$ for $d_U\in[N]$. On receiving the demand vector, the server broadcasts coded transmissions to satisfy the demands of the active users. The rate $R$ is defined as the number of bits transmitted normalized by the file size $B$, or equivalently, the number of transmissions normalized by the subpacketization level. The worst-case rate corresponds to the case where all active users request distinct files, resulting in the maximum number of transmissions. This model is referred to as the $(C,C^{\prime},r,N)$ Combinatorial Multi-Access Hotplug Coded Caching (CMAHCC) model.

The objective of the server is to jointly design the placement and delivery policies so as to minimize the worst-case rate.
\subsection{Preliminaries}
We first review the definition of Placement Delivery Arrays (PDAs) and then present the definition of Hotplug Placement Delivery Arrays (HpPDAs). Next, we introduce the definition of a $t$-design and summarize several results on $t$-designs that are used in this work. Finally, we briefly describe existing hotplug coded caching schemes that are used for comparison.

\begin{defn}(Placement Delivery Array \cite{YCTC})
	\label{pdadefn}
	For positive integers $K,F,Z$ and $S$, an $F\times K$ array $P=(p_{j,k})_{j\in[F],k\in[K]}$, composed of a specific symbol $\textasteriskcentered$ and $S$ non-negative integers $0,1,\cdots,S-1$, is called a $(K,F,Z,S)$ placement delivery array (PDA) if it satisfies the following conditions:
	\begin{itemize}
		\item[C1.] The symbol $\textasteriskcentered$ appears $Z$ times in each column;
		\item[C2.] Each integers occurs at least once in the array;
		\item[C3.] For any two distinct entries $p_{j_1,k_1}$ and $p_{j_2,k_2}$, $p_{j_1,k_1}=p_{j_2,k_2}=s$ is an integer only if
		\begin{itemize}
			\item[a.]$j_1\not=j_2,k_1\not=k_2$, i.e., they lie in distinct rows and distinct columns; and 
			\item[b.] $p_{j_1,k_2}=p_{j_2,k_1}=\textasteriskcentered$, i.e., the corresponding $2\times 2$ sub-array formed by the rows $j_1,j_2$ and columns $k_1,k_2$ must be of the following form
			 	\begingroup
			\setlength\arraycolsep{3pt}
			\begin{align*}
			\begin{bmatrix}
				\textasteriskcentered &s\\
				s &\textasteriskcentered
			\end{bmatrix} \text{ or }
						\begin{bmatrix}
							s &\textasteriskcentered\\
				\textasteriskcentered &s\\
				\end{bmatrix}.
		\end{align*}\endgroup
		\end{itemize}
	\end{itemize} 
\end{defn}
	A $[K,F,Z,S]$ PDA describes a dedicated coded caching scheme with $K$ users with cache memory $M$, $N$ files and subpacketization $F$ such that $\frac{M}{N}=\frac{Z}{F}$ and $R=\frac{S}{F}$.
	\begin{defn}(Hotplug PDA (HpPDA) \cite{RR1})
		Let $K, K^{\prime}, F, F^{\prime}, Z, Z^{\prime}$ and $S$ be integers such that $K \geq K^{\prime}, F\geq F^{\prime}$ and $Z<F^{\prime}$. Consider two arrays given as follows
		\begin{itemize}
		\item $P = (p_{f,k})_{f\in[F],k\in[K]}$ which is an array containing $\textasteriskcentered$	and null. Each column contains $Z$ number of $\textasteriskcentered$s.
		\item $B = (b_{f,k})_{f\in[F^{\prime}],k\in[K^{\prime}]}$ which is a $[K^{\prime}, F^{\prime}, Z^{\prime}, S]$-PDA.
	\end{itemize}
		For each $\tau\in \binom{[K]}{K^{\prime}}$, there exists a subset $\zeta\in\binom{[F]}{F^{\prime}}$ such that
		\begin{align*}
			[P]_{\zeta\times\tau}\overset{\textasteriskcentered}{=}B,
		\end{align*}where $[P]_{\zeta\times\tau}$ denotes the subarray of $P$ whose rows corresponds to the set $\zeta$ and columns corresponds to the set $\tau$, and $\overset{\textasteriskcentered}{=}$ denotes that the positions of all $\textasteriskcentered$ are same in both the arrays. We call it a $(K, K^{\prime}, F, F^{\prime}, Z, Z^{\prime}, S)$- HpPDA $(P, B)$.
	\end{defn}
	The following theorem gives the parameters of a hotplug	coded caching scheme obtained from an HpPDA.
	\begin{thm}(\cite{RR1})
		Given a $(K, K^{\prime}, F, F^{\prime}, Z, Z^{\prime}, S)$-HpPDA $(P, B)$, there exists a $(K, K^{\prime}, N)$ hotplug coded caching	scheme obtaining the following memory-rate pair,
		\begin{align*}
			\left(\frac{M}{N},R\right)=\left(\frac{Z}{F^{\prime}},\frac{S}{F^{\prime}}\right),
		\end{align*}where $M\leq N$ denotes the number of files each user can store in it’s cache.
	\end{thm}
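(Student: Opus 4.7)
The plan is to build the hotplug coded caching scheme directly from the two arrays of the HpPDA, using $P$ to specify an MDS-coded cache placement and $B$ to orchestrate delivery. Since the memory expression $Z/F'$ has denominator $F'$ rather than $F$, MDS-coded placement is the natural choice: I would split each file $W_n$ into $F'$ information subfiles of size $B/F'$ bits and expand them, via an $(F,F')$ MDS code over a sufficiently large field, into $F$ coded subfiles $\widetilde{W}_n^{1},\ldots,\widetilde{W}_n^{F}$, so that any $F'$ of them suffice to recover $W_n$. For each user $k\in[K]$, I would store the coded subfiles $\{\widetilde{W}_n^{f} : n\in[N],\, p_{f,k}=\textasteriskcentered\}$, which amounts to exactly $Z$ coded subfiles per file and matches the memory budget $M/N=Z/F'$.

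For the delivery phase, I would invoke the defining property of the HpPDA: given any active user set $\tau\in\binom{[K]}{K'}$, there exists a row subset $\zeta=\{f_1,\ldots,f_{F'}\}\in\binom{[F]}{F'}$ with $[P]_{\zeta\times\tau}\overset{\textasteriskcentered}{=}B$. The server then restricts attention to the $F'$ coded subfiles indexed by $\zeta$ and runs the standard PDA-based delivery prescribed by $B$: for each integer $s\in[0,S-1]$ it broadcasts $\bigoplus_{(i,j):\,b_{i,j}=s}\widetilde{W}_{d_{k_j}}^{\,f_i}$, where $\tau=\{k_1,\ldots,k_{K'}\}$ and $d_{k_j}$ is the demand of user $k_j$. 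This produces exactly $S$ transmissions, each of size $B/F'$ bits, giving the claimed rate $R=S/F'$.

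To verify decodability, fix an active user $k_j\in\tau$. By the star-matching property, the $Z'$ star entries in column $j$ of $B$ correspond to coded subfiles $\widetilde{W}_{d_{k_j}}^{\,f_i}$ that user $k_j$ already caches, while the $F'-Z'$ integer entries in that column correspond to missing subfiles. For each such integer $s=b_{i,j}$, condition C3 of the PDA $B$ ensures that every other summand $\widetilde{W}_{d_{k_{j'}}}^{\,f_{i'}}$ in the broadcast for $s$ satisfies $b_{i',j}=\textasteriskcentered$, which by the HpPDA identity forces $p_{f_{i'},k_j}=\textasteriskcentered$; hence that summand is cached at user $k_j$, who can XOR it out and recover $\widetilde{W}_{d_{k_j}}^{\,f_i}$. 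Combining cached and decoded subfiles, user $k_j$ obtains all $F'$ coded subfiles indexed by $\zeta$ and reconstructs $W_{d_{k_j}}$ by MDS decoding.

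The only genuinely subtle step is aligning the abstract PDA side-information structure encoded in $B$ with the concrete cache contents dictated by $P$, and the HpPDA identity $[P]_{\zeta\times\tau}\overset{\textasteriskcentered}{=}B$ is precisely what makes this bridge work for every active set $\tau$. Once that identification is established, the memory-rate pair $\left(\frac{Z}{F'},\frac{S}{F'}\right)$ follows by direct counting: each user stores $Z$ coded subfiles of size $B/F'$ bits, and the server sends $S$ messages of size $B/F'$ bits regardless of which $K'$-subset of users is active.
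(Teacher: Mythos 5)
Your proof is correct and follows essentially the same route as the source: the paper states this theorem as a cited preliminary from \cite{RR1} without reproving it, and the standard argument there (and in the paper's own generalization via Algorithm \ref{Algo1}) is exactly your construction --- split each file into $F'$ pieces, expand to $F$ coded subfiles with an $(F,F')$ MDS code, place according to the stars of $P$, deliver according to $B$ on the row set $\zeta$ guaranteed by the HpPDA property, and use PDA condition C3 together with $[P]_{\zeta\times\tau}\overset{\textasteriskcentered}{=}B$ for decodability. No gaps; the counting giving $\left(\frac{Z}{F'},\frac{S}{F'}\right)$ is as you state.
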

	\begin{defn}(\cite{S})
		A design is a pair $(X, \mathcal{A})$ such that the following properties are satisfied:
		\begin{enumerate}
			\item $X$ is a set of elements called points, and
		\item $\mathcal{A}$ is a collection of nonempty subsets of $X$ called blocks.
	\end{enumerate}
	\end{defn}
\begin{defn}($t$-design\cite{S}). Let $v, k, \lambda$ and $t$ be positive integers such that $v>k\geq t$. A $t-(v, k, \lambda)$-design is a design $(X, \mathcal{A})$	such that the following properties are satisfied:
	\begin{enumerate}
		\item $|X| = v$,
	\item each block contains exactly $k$ points, and
	\item every set of $t$ distinct points is contained in exactly $\lambda$ blocks.
\end{enumerate}
\end{defn}
\begin{thm}(\cite{S}) \label{lambdas}Suppose that $(X, \mathcal{A})$ is a $t-(v, k, \lambda)$ design. Suppose that $Y \subset X$, where $|Y| = s \leq t$. Then there are exactly
	\begin{align*}
		\lambda_s=\frac{\lambda\binom{v-s}{t-s}}{\binom{k-s}{t-s}}
	\end{align*}blocks in $\mathcal{A}$ that contain all the points in $Y$.
\end{thm}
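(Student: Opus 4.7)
The plan is to prove the statement by a double-counting argument on the set of incidences between $t$-subsets that contain $Y$ and the blocks that contain them. This is the standard technique for derived parameters of $t$-designs and avoids any induction.

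First I would introduce the counting set
\[
\mathcal{I} = \{(T,A) : Y \subseteq T \subseteq X,\ |T|=t,\ A \in \mathcal{A},\ T \subseteq A\},
\]
and evaluate $|\mathcal{I}|$ in two ways. Counting by $T$ first: the number of $t$-subsets $T$ of $X$ that contain the fixed $s$-subset $Y$ equals $\binom{v-s}{t-s}$, since one must pick the remaining $t-s$ points from $X\setminus Y$. For each such $T$, the defining property of a $t$-$(v,k,\lambda)$ design asserts that there are exactly $\lambda$ blocks containing $T$. Hence
\[
|\mathcal{I}| = \lambda\binom{v-s}{t-s}.
\]

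Next I would count by $A$ first. Every pair $(T,A)\in\mathcal{I}$ forces $Y\subseteq A$, so only blocks containing $Y$ contribute; let $\lambda_s$ denote their number, which is what we want to compute. For each such block $A$, the $t$-subsets $T$ with $Y\subseteq T\subseteq A$ are in bijection with the $(t-s)$-subsets of $A\setminus Y$, and $|A\setminus Y|=k-s$ because $Y\subseteq A$ and $|A|=k$. Therefore each such $A$ contributes $\binom{k-s}{t-s}$ pairs, giving
\[
|\mathcal{I}| = \lambda_s \binom{k-s}{t-s}.
\]
Equating the two expressions and solving for $\lambda_s$ yields the claimed formula.

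There is essentially no obstacle here: the only thing to verify carefully is that $\binom{k-s}{t-s}$ is nonzero, which holds because $s\leq t\leq k$, so the division is legitimate and $\lambda_s$ is uniquely determined. A minor subtlety worth flagging is that $\lambda_s$ is shown to be independent of which $s$-subset $Y$ is chosen, a fact the argument delivers for free since the right-hand side depends only on $v,k,t,\lambda,s$; this independence is exactly what is needed when we invoke the theorem later with varying $Y$ in the construction of the generalized HpPDA from $t$-designs.
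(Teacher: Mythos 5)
Your double-counting argument is correct and complete: counting the incidences $(T,A)$ with $Y\subseteq T\subseteq A$, $|T|=t$, in both orders gives $\lambda\binom{v-s}{t-s}=\lambda_s\binom{k-s}{t-s}$, and the divisor is nonzero since $t\leq k$. The paper does not prove this theorem itself (it is quoted from the cited reference on combinatorial designs), and your proof is precisely the standard argument given there, so there is nothing further to reconcile.
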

\begin{thm}(\cite{S}) Suppose that $(X, \mathcal{A})$ is a $t-(v, k, \lambda)$ design. Suppose that $Y, Z \subset X$, where $Y \cap Z = \emptyset, |Y| =i, |Z| = j$, and $i + j \leq t$. Then there are exactly
	\begin{align*}
		\lambda_i^{i+j}=\frac{\lambda\binom{v-i-j}{k-i}}{\binom{v-t}{k-t}}
	\end{align*}blocks in $\mathcal{A}$ that contain all the points in $Y$ and none of the points in $Z$.
\end{thm}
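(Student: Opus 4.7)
The plan is to prove the formula by induction on $j=|Z|$, treating Theorem \ref{lambdas} as the base case.

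When $j=0$, the claim becomes $\lambda_i = \frac{\lambda\binom{v-i}{k-i}}{\binom{v-t}{k-t}}$, which must be reconciled with the form $\frac{\lambda\binom{v-i}{t-i}}{\binom{k-i}{t-i}}$ supplied by Theorem \ref{lambdas}. A straightforward factorial expansion shows that both expressions equal $\frac{\lambda(v-i)!}{(t-i)!(k-t)!(v-k)!}$, so the base case is established.

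For the inductive step, I would fix any $z\in Z$ and partition the blocks that contain $Y$ and avoid $Z\setminus\{z\}$ according to whether they contain $z$, yielding
\begin{align*}
\lambda_i^{i+j} \;=\; \lambda_i^{i+(j-1)} \;-\; \lambda_{i+1}^{(i+1)+(j-1)}.
\end{align*}
Here the first term on the right counts blocks containing $Y$ and avoiding $Z\setminus\{z\}$, while the second counts blocks containing $Y\cup\{z\}$ and avoiding $Z\setminus\{z\}$. Both terms fall under the inductive hypothesis, since the parameter sums remain bounded by $t$. Substituting their closed-form values gives
\begin{align*}
\lambda_i^{i+j} \;=\; \frac{\lambda}{\binom{v-t}{k-t}}\left[\binom{v-i-j+1}{k-i} - \binom{v-i-j}{k-i-1}\right],
\end{align*}
and Pascal's identity $\binom{v-i-j+1}{k-i} = \binom{v-i-j}{k-i} + \binom{v-i-j}{k-i-1}$ collapses the bracketed expression to $\binom{v-i-j}{k-i}$, completing the induction.

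The main obstacle, albeit a mild one, is the parameter bookkeeping: verifying that the inductive hypothesis applies to both subcases, and in particular that the second term still respects $(i+1)+(j-1)=i+j\leq t$. Once these constraints are confirmed, the remainder is routine binomial algebra driven entirely by Pascal's rule.
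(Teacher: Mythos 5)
The paper does not prove this statement; it is quoted from \cite{S} (Stinson) without proof, so there is no in-paper argument to compare against. Judged on its own, your inductive argument is correct and complete. The base case $j=0$ does reduce to Theorem \ref{lambdas}, and the two expressions $\frac{\lambda\binom{v-i}{t-i}}{\binom{k-i}{t-i}}$ and $\frac{\lambda\binom{v-i}{k-i}}{\binom{v-t}{k-t}}$ are indeed equal — both simplify to $\frac{\lambda\,(v-i)!\,(k-t)!}{(v-t)!\,(k-i)!}$ — though the common factorial value you wrote, $\frac{\lambda(v-i)!}{(t-i)!(k-t)!(v-k)!}$, is not that quantity; this is only a slip in the intermediate bookkeeping, not in the identity itself. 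The recursion $\lambda_i^{i+j}=\lambda_i^{i+(j-1)}-\lambda_{i+1}^{(i+1)+(j-1)}$ obtained by splitting on whether a block contains the chosen $z\in Z$ is exactly right, the parameter constraints $i+(j-1)\leq t$ and $(i+1)+(j-1)=i+j\leq t$ are both satisfied so the inductive hypothesis applies to each term, and Pascal's identity closes the computation. This is the standard textbook proof of the result.
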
The above theorem leads to the following corollary.
\begin{cor}(\cite{S}) \label{lambdats}Suppose that $(X, \mathcal{A})$ is a $t-(v, k, \lambda)$ design and	$Y \subset T \subset X$, where $|T| = t, |Y| = i$ with $i\leq t$. Then there are exactly
	\begin{align*}
		\lambda_s^t=\frac{\lambda\binom{v-t}{k-s}}{\binom{v-t}{k-t}}
	\end{align*}blocks in $\mathcal{A}$ that contain all the points in $Y$ and none of the points in $T\setminus Y$.
\end{cor}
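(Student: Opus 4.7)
The plan is to obtain this corollary as a direct specialization of the preceding theorem, by choosing the forbidden set to be exactly the complement of $Y$ inside $T$. Recall that the preceding theorem counts, for disjoint subsets $Y,Z\subset X$ with $|Y|=i$, $|Z|=j$, and $i+j\leq t$, the number of blocks containing every point of $Y$ and no point of $Z$, and gives the closed form $\lambda\binom{v-i-j}{k-i}/\binom{v-t}{k-t}$.

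First, I would set $Z := T\setminus Y$. Since $Y\subset T$ with $|T|=t$ and $|Y|=i\leq t$, the sets $Y$ and $Z$ are disjoint, with $|Z|=t-i$. In particular, with $j=t-i$, the hypothesis $i+j\leq t$ is satisfied (indeed with equality, $i+j=t$), so the theorem applies verbatim to this pair $(Y,Z)$.

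Next, I would observe that the set of blocks counted by the theorem in this instance is, by construction, precisely the set of blocks that contain all points of $Y$ and none of the points of $T\setminus Y$, which is exactly the family the corollary asks us to enumerate. Substituting $j=t-i$ into the formula from the theorem yields
\begin{align*}
\lambda_i^{\,t} \;=\; \frac{\lambda\binom{v-i-j}{k-i}}{\binom{v-t}{k-t}} \;=\; \frac{\lambda\binom{v-t}{k-i}}{\binom{v-t}{k-t}},
\end{align*}
which matches the claimed expression (with the subscript $s$ in the statement playing the role of $i$).

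There is essentially no obstacle here: the only thing to check is that the disjointness hypothesis and the size constraint $i+j\leq t$ of the parent theorem are both met by the choice $Z=T\setminus Y$, both of which follow immediately from $Y\subset T$ and $|T|=t$. Hence the corollary is an immediate consequence of the theorem.
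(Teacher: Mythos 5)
Your proposal is correct and matches the paper's (implicit) derivation: the paper simply notes that the corollary follows from the preceding theorem, and your specialization $Z=T\setminus Y$, $j=t-i$, with the substitution $\binom{v-i-j}{k-i}=\binom{v-t}{k-s}$, is exactly that argument. The only cosmetic point is the paper's notational mismatch between $i$ in the hypothesis and $s$ in the formula, which you correctly identify as the same quantity.
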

\subsubsection{Some Existing Schemes on Hotplug Coded Caching}
We now present the MT scheme \cite{MT}, the CRR MT scheme and the CRR $t$-scheme \cite{CRR}, and the RR scheme \cite{RR1}. 

\textit{MT Scheme\cite{MT}}: Fix $t\in[K^{\prime}]$ and partition each file into $\binom{K^{\prime}}{t}$ subfiles as $W_{n}=\{W_{n,T}:T\in\binom{[K^{\prime}]}{t}\},\;\forall n \in[N]$. Then using an $\left[\binom{K}{t},\binom{K^{\prime}}{t}\right]$-MDS code, encode each subfile of every file as 
\begin{align*}
	\begin{bmatrix}
		C_{n,S^{\prime}_1}\\
		C_{n,S^{\prime}_2}\\
		\vdots\\
		C_{n,S^{\prime}_{\binom{K}{t}}}
	\end{bmatrix}=G^{T}\begin{bmatrix}
	W_{n,S^{\prime}_1}\\
	W_{n,S^{\prime}_2}\\
	\vdots\\
	W_{n,S^{\prime}_{\binom{K^{\prime}}{t}}}
	\end{bmatrix}, \forall n\in[N].
\end{align*}\textit{Placement Phase}: The cache contents of user $j\in[K]$ are:
\begin{align*}
	Z_j=\{C_{n,T}:T\in\binom{[K]}{t},j\in T,n\in[N]\}
\end{align*}\textit{Delivery Phase}: Let $I$ denote the set of online users during the delivery phase such that $I\in\binom{[K]}{K^{\prime}}$. The demands of the $K^{\prime}$ online users are given as $D[I]=\{d_{i_1},d_{i_2},\cdots,d_{i_{K^{\prime}}}\}$. Then for all $S\in\binom{[K^{\prime}]}{t+1}$, the server will broadcast the following transmission
\begin{align*}
	X_{S}=\sum\limits_{k\in S} C_{d_{k},S\setminus\{k\}}.
\end{align*}\textit{Rate Expression}: This scheme achieves the rate-memory trade-off given as 
\begin{align*}
	\left(M_t,R_t\right)=\left(N\frac{\binom{K-1}{t-1}}{\binom{K}{t}},\frac{\binom{K^{\prime}}{t+1}}{\binom{K^{\prime}}{t}}\right),\;\forall t\in[K^{\prime}].
\end{align*}\textit{CRR MT Scheme\cite{CRR}}: Fix $t\in[K^{\prime}]$ and partition each file into $\binom{K^{\prime}}{t}-\binom{K^{\prime}-1}{t-1}+\binom{K-1}{t-1}$ subfiles as $W_{n}=\{W_{n,T}:T\in\left[\binom{K^{\prime}}{t}-\binom{K^{\prime}-1}{t-1}+\binom{K-1}{t-1}\right]\},\;\forall n \in[N]$. Then using an $\left[\binom{K}{t},\left(\binom{K^{\prime}}{t}-\binom{K^{\prime}-1}{t-1}+\binom{K-1}{t-1}\right)\right]$-MDS code, encode each subfile of every file as 
\begin{align*}
\begin{bmatrix}
	C_{n,S^{\prime}_1}\\
	C_{n,S^{\prime}_2}\\
	\vdots\\
	C_{n,S^{\prime}_{\binom{K}{t}}}
\end{bmatrix}=G^{T}\begin{bmatrix}
	W_{n,S^{\prime}_1}\\
	W_{n,S^{\prime}_2}\\
	\vdots\\
	W_{n,S^{\prime}_{\left(\binom{K^{\prime}}{t}-\binom{K^{\prime}-1}{t-1}+\binom{K-1}{t-1}\right)}}
\end{bmatrix}, \forall n\in[N].
\end{align*}\textit{Placement Phase}: The cache contents of user $j\in[K]$ are:
\begin{align*}
Z_j=\{C_{n,T}:T\in\binom{[K]}{t},j\in T,n\in[N]\}
\end{align*}\textit{Delivery Phase}: Let $I$ denote the set of online users during the delivery phase such that $I\in\binom{[K]}{K^{\prime}}$. The demands of the $K^{\prime}$ online users are given as $D[I]=\{d_{i_1},d_{i_2},\cdots,d_{i_{K^{\prime}}}\}$. Then for all $S\in\binom{[K^{\prime}]}{t+1}$, the server will broadcast the following transmission
\begin{align*}
X_{S}=\sum\limits_{k\in S} C_{d_{k},S\setminus\{k\}}.
\end{align*}\textit{Rate Expression}: This scheme achieves the rate-memory trade-off given as 
\begin{align*}
\left(M_t,R_t\right)=&\left(\frac{N\binom{K-1}{t-1}}{\binom{K^{\prime}}{t}-\binom{K^{\prime}-1}{t-1}+\binom{K-1}{t-1}},\right.\\&\left.\frac{\binom{K^{\prime}}{t+1}}{\binom{K^{\prime}}{t}-\binom{K^{\prime}-1}{t-1}+\binom{K-1}{t-1}}\right),\;\forall t\in[K^{\prime}].
\end{align*}\textit{CRR $t$-Scheme \cite{CRR}}: Let $(X,\mathcal{A})$ a $t$-$(v,k,\lambda)$ design. An array $P$ whose columns are indexed by the points in $X$ and rows are indexed by the blocks in $\mathcal{A}$ is defined as
\begin{align*}
	P(A,i)=
	\begin{cases}
		\textasteriskcentered, &\text{ if }i\in A\\
		null, &\text{ if }i\not\in A
	\end{cases}.
\end{align*}For $0\leq a_s\leq \lambda^t_s,1\leq s\leq t-1$, consider a set 
\begin{align*}
	\mathcal{R}=\bigcup\limits_{s=1}^{t-1} \{(Y,i):Y\in\binom{[t]}{s},i\in[a_s]\}.
	\end{align*}Another array $B$ whose columns are indexed by $[t]$ and rows are indexed by the elements in $\mathcal{R}$ is defined as
	\begin{align*}
		B((Y,i),j)=
		\begin{cases}
			\textasteriskcentered, &\text{ if }j\in Y\\
			null, &\text{ if }j\not\in Y
		\end{cases}.
	\end{align*}The pair $(P, B)$ forms a $(K, K^{\prime}, F, F^{\prime}, Z, Z^{\prime}, S)$-HpPDA,	where $K=v,K^{\prime}=t,F=\lambda_0,F^{\prime}=\sum_{s=1}^{t-1}a_s\binom{t-1}{s-1}, Z=\lambda_1,Z^{\prime}=\sum_{s=1}^{t-1}a_s\binom{t-1}{s-1}$ and $S=\sum_{s=1}^{t-1}a_s\binom{t}{s+1}$.
	
	\textit{Placement Phase}: Each file is divided into $Z+F^{\prime}-Z^{\prime}$ subfiles as $W_{n}=\{W_{n,i}:i\in[Z+F^{\prime}-Z^{\prime}]\}$. Using an $\left[\lambda_0,Z+F^{\prime}-Z^{\prime}\right]$-MDS code, each subfile of every file is encoded and indexed by a block of the $t$-$(v,k,\lambda)$ design as $C_{n,A}:A\in\mathcal{A},\;\forall n\in[N]$. The contents of user $j\in[v]$ is given as
	\begin{align*}
		Z_j=\{C_{n,A}:A\in\mathcal{A},j\in A,\;\forall n\in[N]\}.
	\end{align*}\textit{Delivery Phase}: Let $I$ denote the set of online users during the delivery phase such that $I\in\binom{[v]}{t}$. The demands of these $t$ users are encapsulated in the demand vector $D[I]=\{d_{i_1},d_{i_2},\cdots,d_{i_t}\}$. The server makes a transmission for each non-star entry $s$ in $B$.
	
	\textit{Rate Expression}: The scheme achieves the following rate-memory trade-off 
	\begin{align*}
		(M,R)=\left(\frac{Z}{Z+F^{\prime}-Z^{\prime}},\frac{S}{Z+F^{\prime}-Z^{\prime}}\right).
	\end{align*}\textit{RR Scheme \cite{RR1}}: Let $(X,\mathcal{A})$ a $t$-$(v,k,\lambda)$ design. An array $P$ whose columns are indexed by the points in $X$ and rows are indexed by the blocks in $\mathcal{A}$ is defined as
	\begin{align*}
	P(A,i)=
	\begin{cases}
		\textasteriskcentered, &\text{ if }i\in A\\
		null, &\text{ if }i\not\in A
	\end{cases}.
	\end{align*}For $0\leq a_s\leq \lambda^t_s,1\leq s\leq t-1$, consider a set 
	\begin{align*}
	\mathcal{R}=\bigcup\limits_{s=1}^{t-1} \{(Y,i):Y\in\binom{[t]}{s},i\in[a_s]\},
	\end{align*}such that $|\mathcal{R}|>\lambda_1$. Another array $B$ whose columns are indexed by $[t]$ and rows are indexed by the elements in $\mathcal{R}$ is defined as
	\begin{align*}
	B((Y,i),j)=
	\begin{cases}
		\textasteriskcentered, &\text{ if }j\in Y\\
		null, &\text{ if }j\not\in Y
	\end{cases}.
	\end{align*}Further, define the PDA $\bar{B}$ which is obtained by removing all the integers in set	$T$ (defined in Theorem 10\cite{RR1}) from the array B. 
	
	\textit{Placement Phase}: Each file is divided into $|\mathcal{R}|$ subfiles as $W_{n}=\{W_{n,i}:i\in[|\mathcal{R}|]\}$. Using an $\left[\lambda_0,|\mathcal{R}|\right]$-MDS code, each subfile of every file is encoded and indexed by a block of the $t$-$(v,k,\lambda)$ design as $C_{n,A}:A\in\mathcal{A},\;\forall n\in[N]$. The contents of user $j\in[v]$ is given as
	\begin{align*}
		Z_j=\{C_{n,A}:A\in\mathcal{A},j\in A,\;\forall n\in[N]\}.
	\end{align*}\textit{Delivery Phase}: Let $I$ denote the set of online users during the delivery phase such that $I\in\binom{[v]}{t}$. The demands of these $t$ users are encapsulated in the demand vector $D[I]=\{d_{i_1},d_{i_2},\cdots,d_{i_t}\}$. The server makes a transmission for each non-star entry $s$ in $\bar{B}$.
	
	\textit{Rate Expression}: The scheme achieves the following rate-memory trade-off 
	\begin{align*}
		(M,R)=\left(\frac{\lambda_1}{\sum\limits_{s=1}^{t-1}a_s\binom{t}{s}},\frac{\sum\limits_{s=1}^{t-1}a_s\binom{t}{s+1}-|T|}{\sum\limits_{s=1}^{t-1}a_s\binom{t}{s}}\right).
	\end{align*}
	\section{Main Results}
	\label{mainresults}
	In this section, we present the main results of this work. In the following theorem, we characterize a set of achievable rate-memory points for the combinatorial multi-access hotplug system, using a $t$-$(v,k,\lambda)$ design, where the $v$ points represent the $v$ caches, each block of size $k$ indexes a coded subfile, and $t$ determines the number of online caches during the delivery phase.
		\begin{thm}[$t$-scheme]Consider a $t$-$(v,k,\lambda)$ design and the corresponding combinatorial multi-access hotplug coded caching model with $C=v$ caches, $C^{\prime}=t$ online caches, access degree $r$, and library size $N$. For each $j\in[r]$ and $s\in[t-j]$, let the parameters $0\leq a_{s,j}\leq \lambda^t_s$ define \begin{align*}Y_j={\sum\limits_{i=1}^{j}(-1)^{i+1}\binom{j}{i}\lambda_i + \sum\limits_{s=1}^{t-j} a_{s,j}\binom{t-j}{s}}.\end{align*}If the parameters are chosen such that $Y_j\leq Y_{j-1}$ for all $j\in[r]$, then there exists a $(v,t,r,N)$ combinatorial multi-access hotplug scheme with the rate-memory trade-off
			\begin{align}
				\left(\frac{M}{N},R\right)&=\left(\frac{\lambda_1}{\sum\limits_{i=1}^{r}(-1)^{i+1}\binom{r}{i}\lambda_i + \sum\limits_{s=1}^{t-r} a_{s,r}\binom{t-r}{s}}\right.,\nonumber\\&\left.\frac{\sum\limits_{j=1}^{r}\sum\limits_{s=1}^{t-j}a_{s,j}\binom{t}{s+j}\binom{v-t}{r-j}}{\sum\limits_{i=1}^{r}(-1)^{i+1}\binom{r}{i}\lambda_i + \sum\limits_{s=1}^{t-r} a_{s,r}\binom{t-r}{s}}\right)
			\end{align}
		\end{thm}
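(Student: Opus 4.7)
The plan is to realize the scheme as an instance of the generalized HpPDA framework set up in Section \ref{proposedscheme}, using the $t$-$(v,k,\lambda)$ design as the combinatorial backbone. First I would build the placement array $P$ with rows indexed by the $\lambda_0$ blocks of the design and columns indexed by the $\binom{v}{r}$ users (i.e., the $r$-subsets of $[v]$), setting $P(A,U)=\textasteriskcentered$ iff $A\cap U\neq\emptyset$ and null otherwise. Each file $W_n$ is split into $Y_r$ equal subfiles and encoded by a $[\lambda_0,Y_r]$-MDS code into coded subfiles $\{C_{n,A}\}_{A\in\mathcal{A}}$; cache $j$ stores exactly those $C_{n,A}$ with $j\in A$, giving $\lambda_1$ coded subfiles per file and hence $M/N=\lambda_1/Y_r$, matching the claimed memory point.

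Next I would describe the delivery array $B$ in $r$ layers, one per value $j\in[r]$ of the intersection size $|U\cap I|$ that an active user $U$ can exhibit. For fixed $j$ and $s\in[t-j]$, each of the $a_{s,j}$ parallel replicas contributes multicast rows indexed by pairs $(Y,L)$ with $Y\in\binom{I}{s+j}$ and $L\in\binom{[v]\setminus I}{r-j}$; each such row encodes a single coded transmission targeting the $\binom{s+j}{j}$ active users whose online-access pattern is a $j$-subset of $Y$ and whose offline-access pattern is $L$. Enumerating these triples yields $\sum_{j=1}^{r}\sum_{s=1}^{t-j} a_{s,j}\binom{t}{s+j}\binom{v-t}{r-j}$ transmissions in total, which becomes the numerator of $R$ after dividing by the subpacketization $Y_r$.

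For decodability, consider an active user $U$ with $|U\cap I|=j$. Applying Theorem \ref{lambdas} together with inclusion--exclusion over the $j$ online caches of $U$, the user reads $\sum_{i=1}^{j}(-1)^{i+1}\binom{j}{i}\lambda_i$ distinct coded subfiles directly from its cache view. The layer-$j$ transmissions in which $U$ participates contribute $\sum_{s=1}^{t-j} a_{s,j}\binom{t-j}{s}$ further coded subfiles, because in each such multicast the summands targeted at the other $s+j-1$ users are themselves already cached by $U$ (this is the analogue of the PDA $2\times 2$ property inside a single layer). Adding the two counts gives $Y_j$ coded subfiles; the hypothesis $Y_j\leq Y_{j-1}$ for every $j\in[r]$ propagates to $Y_j\geq Y_r$, so every active user collects at least $Y_r$ of the $\lambda_0$ MDS codewords and recovers its requested file $W_{d_U}$ by the MDS property.

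The main obstacle I expect is the delivery-side bookkeeping in Step 2--3: verifying that, for every $j$, each layer-$j$ multicast contains exactly one unknown subfile for each of its targeted users, while simultaneously showing that the monotonicity $Y_j\leq Y_{j-1}$ is precisely the condition that permits the redundant cross-layer transmissions to be removed without leaving any user short of $Y_r$ linearly independent MDS codewords. Once that generalized HpPDA property of $(P,B)$ is established, the stated rate follows by counting transmissions and normalizing by $Y_r$, completing the proof.
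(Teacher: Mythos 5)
Your proposal follows essentially the same route as the paper: it instantiates the generalized HpPDA framework with the same block-indexed placement array $P$, the same layered delivery arrays $B_j$ whose multicasts are indexed by triples $(W,L,i)$ with $W\in\binom{I}{s+j}$ and $L\in\binom{[v]\setminus I}{r-j}$, the same inclusion--exclusion count $Z_j=\sum_{i=1}^{j}(-1)^{i+1}\binom{j}{i}\lambda_i$, and the same MDS decodability argument via $Y_j\geq Y_r$. The only slip is cosmetic: a layer-$j$ multicast serves $\binom{s+j}{j}$ users, so the number of \emph{other} summands a targeted user must already have cached is $\binom{s+j}{j}-1$ rather than $s+j-1$ (these agree only for $j=1$), which does not affect the validity of the argument.
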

	\begin{proof}
		The proof is presented in Section \ref{ghppdatdesign}.
	\end{proof}The theorem characterizes a family of achievable rate-memory pairs for the combinatorial multi-access hotplug coded caching model based on $t$-designs. The condition $Y_j\leq Y_{j-1}$ ensures that every active user has access to a sufficient number of coded subfiles to decode its requested file, while allowing users connecting to fewer online caches to receive additional coded subfiles from multicast transmissions. This flexibility enables a reduction in the overall number of server transmissions by allowing the parameters $a_{s,j}$ to be chosen strictly smaller than $\lambda^t_s$, thereby eliminating multicast transmissions that are unnecessary for satisfying the decodability requirement. Moreover, different choices of the parameters $a_{s,r}, s\in[t-r]$ lead to different subpacketization levels, which in turn result in different cache memory requirements and induce distinct cache placement patterns. Consequently, the proposed framework admits multiple feasible placements that achieve different points on the rate-memory trade-off.
	\begin{remark}
		Here, $Y_j$ denotes the total number of coded subfiles accessible to a user connecting to exactly $j$ online caches after adding the subfiles obtained through cache access and those obtained from server transmissions. The choice $a_{s,j}=\lambda^t_s$ for all admissible $s$ and $j$ ensures that every active user has access to exactly the same number of coded subfiles, equal to the minimum required for decoding, as shown in Claim \ref{as=lambdatsclaim}. For general choices with $a_{s,j}<\lambda^t_s$, the condition $Y_j\leq Y_{j-1}$ implies that users connecting to fewer online caches may obtain more coded subfiles than the decoding threshold. These additional coded subfiles correspond to multicast transmissions that are redundant for file recovery and can therefore be eliminated without affecting decodability. Consequently, appropriately selecting the parameters $a_{s,j}$ enables a reduction in the overall number of server transmissions while preserving correctness.
	\end{remark}
	\section{Proposed Hotplug Coded Caching Scheme for Combinatorial Multi-Access Networks}
	\label{proposedscheme}
We begin this section by extending the HpPDA framework to the combinatorial multi-access setting, where users may access different numbers of online caches. To capture this heterogeneity, we associate a PDA $B_j$ with users that access exactly $j$ online caches, for $j\in[r]$. We first formalize the generalized HpPDA with a single global placement array $P_c$, a derived user access array $P$, and a family of PDAs $B_j,\;\forall j\in[r]$. We then present an algorithm that specifies the corresponding placement and delivery procedures.
\subsection{Generalized Multi-Access Hotplug Placement Delivery Arrays}
\begin{defn}\label{CHpPDAdefn}Let $C,C^{\prime},r,F,Z_c,Z$ be integers with $C>C^{\prime}$, and let $F^{\prime}_j,Z^{\prime}_j,S_j$ be positive integers with $F>F^{\prime}_j$ for all $j\in[r]$. Consider the following arrays:
	\begin{itemize}
		\item $P_c=(\alpha_{f,i})_{f\in[F],i\in[C]}$ is an array composed of the symbol $\textasteriskcentered$ and null entries. Each column of $P_c$ contains exactly $Z_c$ instances of $\textasteriskcentered$.
		\item $P=(\beta_{f,U})_{f\in[F],U\in\binom{[C]}{r}}$, where $\beta_{f,U}=\textasteriskcentered$ if $\alpha_{f,c}=\textasteriskcentered$ and $c\in U$, is an array composed of $\textasteriskcentered$ and null entries. Each column of $P$ contains exactly $Z$ $\textasteriskcentered$ entries.
		
		\item $B_j=(b_{f,k})_{f\in[F^{\prime}_j],k\in\left[\binom{C^{\prime}}{j}\binom{C-C^{\prime}}{r-j}\right]}$, which is a $[\binom{C^{\prime}}{j}\binom{C-C^{\prime}}{r-j},F^{\prime}_j,Z^{\prime}_j,S_j]$-PDA, for each $j\in[r]$.
	\end{itemize}

For each $j\in[r]$ and every subset $I\in\binom{[C]}{C^{\prime}}$, let $\tau_j(I)=\{U\in\binom{[C]}{r}:|U\cap I|=j\}$. There exists a subset $\zeta_j(I)\in\binom{[F]}{F^{\prime}_j}$ such that 
\begin{align}\label{star=condition}
	[P]_{\zeta_j(I)\times\tau_j(I)}\overset{\textasteriskcentered}{=}B_j,
\end{align}where $[P]_{\zeta_j(I)\times\tau_j(I)}$ denotes the subarray of $P$ whose rows correspond to the set $\zeta_j(I)$ and whose columns correspond to the set $\tau_j(I)$, and $\overset{\textasteriskcentered}{=}$ denotes that the positions of all $\textasteriskcentered$ symbols coincide in both arrays. We refer to the collection $(P_c,P,\{B_j\}_{j=1}^{r})$ as a $(C,C^{\prime},r,F,Z_c,Z,\{F^{\prime}_j,Z^{\prime}_j,S_j\}_{j=1}^{r})$-generalized HpPDA.
\end{defn}
\begin{remark}
	When $r=1$, each user connects to a single cache. For any realization of the online cache set $I$, all active users satisfy $|I\cap U|=1$, and hence only the PDA $B_1$ exists. Further, for $r=1$, the arrays $P$ and $P_c$ coincide. Thus, Definition \ref{CHpPDAdefn} specializes to the standard HpPDA definition presented in \cite{RR1}.
\end{remark}We use the following example to illustrate Definition \ref{CHpPDAdefn}.
\begin{example}
	For $C=5,C^{\prime}=3,r=2,F=5,F^{\prime}_1=3,F^{\prime}_2=3,Z=2,Z^{\prime}_1=1,Z^{\prime}_2=2,S_1=1,S_2=6$, we have 
	 	\begingroup
	\setlength\arraycolsep{3pt}
	\begin{align*}
		&P_c=
		\begin{bmatrix}
			\textasteriskcentered & & & & \\
			 &\textasteriskcentered & & & \\
			 & &\textasteriskcentered & & \\
			 & & &\textasteriskcentered & \\
			 & & & &\textasteriskcentered 
		\end{bmatrix},
P=\begin{bmatrix}
 \textasteriskcentered &\textasteriskcentered &\textasteriskcentered &\textasteriskcentered & & & & & &\\
 \textasteriskcentered & & & &\textasteriskcentered &\textasteriskcentered &\textasteriskcentered & & &\\
 &\textasteriskcentered & & &\textasteriskcentered & & &\textasteriskcentered &\textasteriskcentered &\\
 & &\textasteriskcentered & & &\textasteriskcentered & &\textasteriskcentered & &\textasteriskcentered\\
 & & &\textasteriskcentered & & &\textasteriskcentered & &\textasteriskcentered &\textasteriskcentered
\end{bmatrix},\\
	 &B_1=
	\begin{bmatrix}
 \textasteriskcentered &\textasteriskcentered &1 &2 &3 &4 \\
 1 &2 &\textasteriskcentered &\textasteriskcentered &5 &6 \\
 3 &4 & 5 &6 &\textasteriskcentered &\textasteriskcentered 
\end{bmatrix}\text{and }
		B_2=\begin{bmatrix}
			 \textasteriskcentered &\textasteriskcentered &1 \\
			 \textasteriskcentered &1 &\textasteriskcentered \\
			 1 &\textasteriskcentered &\textasteriskcentered 
		\end{bmatrix}.
	\end{align*}
	\endgroup 
	For every $I\in\binom{[C]}{C^{\prime}}$, there exist $(\tau_1(I),\zeta_1(I))$ and $(\tau_2(I),\zeta_2(I))$ such that $[P]_{\zeta_j(I)\times\tau_j(I)}\overset{\textasteriskcentered}{=}B_j$. For instance, let $I=\{1,3,5\}$. In this case, we have $\tau_2(I)=\{13,15,35\}$ and $\zeta_2(I)=\{1,3,5\}$. Next, we have $\tau_1(I)=\{12,14,23,34,25,45\}$ and corresponding $\zeta_1(I)=\{1,3,5\}$.
\end{example}
\subsection{Hotplug Coded Caching Scheme for Combinatorial Multi-Access Networks}
\begin{algorithm}
	\caption{Combinatorial Multi-Access Hotplug Coded Caching Scheme using the $(C,C^{\prime},r,F,Z_c,Z,\{F^{\prime}_j,Z^{\prime}_j,S_j\}_{j=1}^{r})$-generalized HpPDA}
	\label{Algo1}
	\begin{algorithmic}[1]
		\Function{Placement Phase}{$P_c,W_n,\;n\in[N],G$} 
		\State \label{subfiledivide}Divide each file $W_n,\;n\in[N]$ into $F^{\prime}_r-Z^{\prime}_r+Z$ subfiles as $W_{n}=\{W_{n,i}:i\in[F^{\prime}_r-Z^{\prime}_r+Z]\},\;\forall n\in[N].$
		\State \label{subfilencode}Encode the $F^{\prime}_r-Z^{\prime}_r+Z$ subfiles of each file into $F$ coded subfiles using an $[F,F^{\prime}_r-Z^{\prime}_r+Z]$-MDS code with the generator matrix $G$ of dimensions $F^{\prime}_r-Z^{\prime}_r+Z\times F$ as
			\begingroup
		\setlength\arraycolsep{3pt}
		\begin{align*}
			\begin{bmatrix}
				W^{c}_{n,1}\\
				W^{c}_{n,2}\\
				\vdots\\
				W^{c}_{n,F}\\
			\end{bmatrix}=G^T\begin{bmatrix}
				W_{n,1}\\
				W_{n,2}\\
				\vdots\\
				W_{n,F^{\prime}_r-Z^{\prime}_r+Z}
			\end{bmatrix}
		\end{align*}\endgroup
		\For{$i\in[C]$}
		\State \label{subfileplacement}$Z_i\gets\{W^{c}_{n,f}:\alpha_{f,i}=\textasteriskcentered,n\in[N],f\in[F]\}$
		\EndFor
		\EndFunction
		\Function{Delivery Phase}{$\{B_j\}_{j=1}^{r},P,W_n,n\in[N],I,\mathbf{D}(I)$}
		\State \label{onlinecacheset}Let $I$ be the set of online caches with the demand vector being $\mathbf{D}(I)=(d_U:U\in\binom{[C]}{r},U\cap I\not=\emptyset)$.
		\State \label{star=defn}By Definition \ref{CHpPDAdefn}, for $(P,\{B_j\}_{j=1}^{r})$, there exists $(\tau_j(I),\zeta_j(I))$ such that $[P]_{\zeta_j(I)\times\tau_j(I)}\overset{\textasteriskcentered}{=}B_j,\forall j \in[r]$.
		\For{$j\in[r]$}
		\State \label{deliveryP}Make the array $\overline{P}=(\overline{p}_{f,U})_{f\in\zeta_j(I),U\in\tau_j(I)}$ by filling $s\in[S_j]$ integers in the null spaces of the subarray $[P]_{\zeta_j(I)\times\tau_j(I)}$ such that $\overline{P}=B_j$.
		\For{$s\in[S_j]$}
		\State \label{codedtransmissions}The server makes the transmissions \begin{align*}
			X_s=\bigoplus\limits_{\overline{p}_{f,U}=s,f\in\zeta_j(I),U\in\tau_j(I)} W^{c}_{d_U,f}.\end{align*}
		\EndFor
		\EndFor 
		\EndFunction
	\end{algorithmic}
\end{algorithm}
We now describe how a combinatorial multi-access hotplug coded caching scheme is constructed using the $(C,C^{\prime},r,F,Z_c,Z,\{F^{\prime}_j,Z^{\prime}_j,S_j\}_{j=1}^{r})$-generalized HpPDA. 

The placement and delivery phases are explained using Algorithm \ref{Algo1}. 

\textit{Placement Phase}: As shown in Line \ref{subfiledivide}, each file $W_n,n\in[N]$ is divided into $F^{\prime}_r-Z^{\prime}_r+Z$ subfiles. These subfiles are then encoded into $F$ coded subfiles using an $[F,F^{\prime}_r-Z^{\prime}_r+Z]$-MDS code as described in Line \ref{subfilencode}. The encoded subfiles of file $W_n$ are denoted by $W^{c}_{n,f},f\in[F]$. 

The array $P_c$ governs the cache placement. It is an $F\times C$ array whose columns correspond to caches with rows corresponding to the coded subfile indices. The cache $k$ stores the coded subfile $W^{c}_{n,f},n\in[N]$ if $\alpha_{f,k}=\textasteriskcentered$. 

Since there are $Z_c$ stars in each column of $P_c$, each cache stores $NZ_c$ coded subfiles or equivalently, $N\frac{Z_c}{F^{\prime}_r-Z^{\prime}_r+Z}$ files. Thus, we have \begin{align*}\frac{M}{N}=\frac{Z_c}{F^{\prime}_r-Z^{\prime}_r+Z}.\end{align*}\textit{Delivery Phase}: During the delivery phase, the set of online caches is denoted by $I$ and the demand vector of the active users is given by $\mathbf{D}(I)$ as specified in Line \ref{onlinecacheset}. 

By Definition \ref{CHpPDAdefn}, for each $j\in[r]$ and for a given $I$, there exist sets $\tau_j(I)=\{U\in\binom{[C]}{r}:|U\cap I|=j\}$ and $\zeta_j(I)$ such that $[P]_{\zeta_j(I)\times \tau_j(I)}\overset{\textasteriskcentered}{=}B_j$, as invoked in Line \ref{star=defn}.

The array $P$ represents the coded subfiles available to the users from the caches they connect to, where $\beta_{f,U}=\textasteriskcentered$ indicates that the coded subfile $W^c_{n,f},n\in[N]$ is available to user $U$. Using the PDA $B_j$, Algorithm \ref{Algo1} constructs the PDA $\overline{P}$ as given in Line \ref{deliveryP}, whose non-star entries correspond to multicast transmissions. Since $\overline{P}$ has $Z^{\prime}_j$ stars per column, each user accessing $j$ online caches receives $F^{\prime}_j-Z^{\prime}_j$ coded subfiles from server transmissions. 
 
For a given $I$, users are grouped according to the number of online caches they access. A user $U$ satisfying $|U\cap I|=j$ has access to the contents of $j$ online caches, for $j\in[r]$. Let $Z_j$ denote the total number of coded subfiles of each file available to such a user from the online caches it connects to. Since the delivery for these users is governed by the PDA $B_j$, which has $Z^{\prime}_j$ $\textasteriskcentered$ entries in each column, we have $Z_j\geq Z^{\prime}_j$. This process is repeated for each $j\in[r]$. 

Adding the coded subfiles obtained from cache access and server transmissions, a user that connects to exactly $j$ online caches has access to a total of $F^{\prime}_j-Z^{\prime}_j+Z_j$ coded subfiles. The delivery phase is designed such that
\begin{align*}
	F^{\prime}_j-Z^{\prime}_j+Z_j\geq F^{\prime}_r-Z^{\prime}_r+Z
\end{align*}for all $j\in[r]$. Consequently, every active user has access to at least $F^{\prime}_r-Z^{\prime}_r+Z$ coded subfiles, which is sufficient to reconstruct its requested file using the $[F,F^{\prime}_r-Z^{\prime}_r+Z]$-MDS code.

\textit{Rate Expression}: For each $j\in[r]$, the PDA $B_j$ generates $S_j$ multicast transmissions. Thus, the worst-case rate-memory trade-off is given by
\begin{align*}
	\left(\frac{M}{N},R\right)=\left(\frac{Z_c}{Z+F^{\prime}_r-Z^{\prime}_r},\sum\limits_{j=1}^{r}\frac{S_j}{Z+F^{\prime}_r-Z^{\prime}_r}\right)
\end{align*}
\begin{remark}
	Observe that for $r=1$, we have $Z_1=Z$, which implies that the above rate expression and Algorithm \ref{Algo1} reduce to the rate and Algorithm presented in \cite{CRR}.
\end{remark}

\section{Combinatorial Multi-Access Hotplug Coded Caching Using $t$-Designs}
\label{ghppdatdesign}
 In this section, we present how a combinatorial multi-access hotplug coded caching scheme is constructed using $t$-designs. In particular, we describe the construction of the cache placement array $P_c$, the user access array $P$ and the associated family of PDAs $B_j,j\in[r]$, and show how they jointly determine the placement and delivery procedures. Let $(X,\mathcal{A})$ be a $t$-$(v,k,\lambda)$ design, where $X=[v]$ and $\mathcal{A}=\{A_1,A_2,\cdots,A_b\}$. From Theorem \ref{lambdas}, we know that $b=\lambda_0=\frac{\lambda\binom{v}{t}}{\binom{k}{t}}$. 
 
 Consider the array $P_c=(P_c(A,i))_{A\in\mathcal{A},i\in[v]}$ whose rows are indexed by the blocks in $\mathcal{A}$ and columns by the points in $X$. The array $P_c$ is defined as
 \begin{align}\label{pcdefn}
 	P_c(A,i)=\begin{cases}
 		\textasteriskcentered, &\text{if }i\in A\\
 		 null, &\text{if }i\not\in A
 	\end{cases}.
 \end{align}Next, we define the array $P=(P(A,U))_{A\in\mathcal{A},U\in\binom{X}{r}}$ as
 \begin{align}\label{pdefn}
 	P(A,U)=\begin{cases}
 		\textasteriskcentered, &\text{if }U\cap A\not=\emptyset\\
 		null, &\text{if }U\cap A=\emptyset
 	\end{cases}.
 \end{align}Next, we define the PDA $B_j$. Consider the set $\mathcal{R}_j=\bigcup\limits_{s=1}^{t-j} \{(Y,i):Y\in\binom{[t]}{s},i\in[a_{s,j}]\}$, for $j\in[r],s\in[t-j]$, and $0\leq a_{s,j}\leq \lambda^t_s$, with $\lambda^t_s$ defined in Corollary \ref{lambdats}. Note that $|\mathcal{R}_j|=\sum\limits_{s=1}^{t-j} a_{s,j}\binom{t}{s}$. Consider an array $B_j$ whose rows are indexed by the elements of $\mathcal{R}_j$ and whose columns are indexed by the set $\mathcal{C}_j=\{U:U\in\binom{X}{r},|U\cap[t]|=j\}$. Hence, the array $B_j=(B_j((Y,i),U))$ is a $|\mathcal{R}_j|\times|\mathcal{C}_j|=\left(\sum\limits_{s=1}^{t-j}a_{s,j}\binom{t}{s}\times\binom{t}{j}\binom{v-t}{r-j}\right)$ array defined as
 \begin{align}
 	\label{Bjdefn}
 	B_j(&(Y,i),U)=\\
 	&\begin{cases}
 		\textasteriskcentered,& \text{if }\{U\cap[t]\}\cap Y\not=\emptyset\\
 		(Y\cup \{U\cap[t]\},i)_{n_{Y\cup \{U\cap[t]\}}},& \text{if }\{U\cap[t]\}\cap Y=\emptyset\nonumber
 	\end{cases},
 \end{align}where $n_{Y\cup \{U\cap[t]\}}$ denotes the occurrence order of the set $Y\cup \{U\cap[t]\}$ from left to right in $B_j$. We now present the following example to illustrate the working of the arrays defined above.
 \begin{example}
 	Consider the $3$-$(8,4,1)$ design $(X,\mathcal{A})$ where $X=[8]$ and $\mathcal{A}=\{1256, 3478, 2468, 1357, 1458, 2367, 1234, 5678, 1278, 3456,\\ 1368, 2457, 1467, 2358\}$. Further, let each user connect to $r=2$ out of the $|X|=8$ caches. The array $P_c$ is given as
 	\begin{align*}
 	\begingroup
\setlength\arraycolsep{3pt}
 	P_c=\begin{bNiceMatrix}[first-col,first-row]
 	 {}	 &1 &2 &3 &4 &5 &6 &7 &8 \\
 	 1256	&\textasteriskcentered &\textasteriskcentered & & &\textasteriskcentered &\textasteriskcentered & & \\
 	 3478	& & &\textasteriskcentered &\textasteriskcentered & & &\textasteriskcentered &\textasteriskcentered \\
 	 2468	& &\textasteriskcentered & &\textasteriskcentered & &\textasteriskcentered & &\textasteriskcentered \\
 	 1357	&\textasteriskcentered & &\textasteriskcentered & &\textasteriskcentered & &\textasteriskcentered & \\
 	 1458	&\textasteriskcentered & & &\textasteriskcentered &\textasteriskcentered & & &\textasteriskcentered \\
 	 2367	& &\textasteriskcentered &\textasteriskcentered & & &\textasteriskcentered &\textasteriskcentered & \\
 	 1234	&\textasteriskcentered &\textasteriskcentered &\textasteriskcentered &\textasteriskcentered & & & & \\
 	 5678	& & & & &\textasteriskcentered &\textasteriskcentered &\textasteriskcentered &\textasteriskcentered \\
 	 1278	&\textasteriskcentered &\textasteriskcentered & & & & &\textasteriskcentered &\textasteriskcentered \\
 	 3456	& & &\textasteriskcentered &\textasteriskcentered &\textasteriskcentered &\textasteriskcentered & & \\
 	 1368	&\textasteriskcentered & &\textasteriskcentered & & &\textasteriskcentered & &\textasteriskcentered \\
 	 2457	& &\textasteriskcentered & &\textasteriskcentered &\textasteriskcentered & &\textasteriskcentered & \\
 	 1467	&\textasteriskcentered & & &\textasteriskcentered & &\textasteriskcentered &\textasteriskcentered & \\
 	 2358	& &\textasteriskcentered &\textasteriskcentered & &\textasteriskcentered & & &\textasteriskcentered 
 	\end{bNiceMatrix}
 	 \endgroup
 	\end{align*}Since there are $b=14$ blocks and $v=8$ caches, the array $P_c$ has dimension $14\times8$. Due to space limitations, we do not display the full array $P$, which has dimensions $b\times\binom{v}{r}=14\times\binom{8}{2}=14\times28$. Instead, for a given online cache set $I$, we explicitly describe the subsets $\zeta_j(I)$ and $\tau_j(I)$, for each $j\in[r]$, and show that $[P]_{\zeta_j(I)\times\tau_j(I)}\overset{\textasteriskcentered}{=}B_j$.

 	For this example, from Corollary \ref{lambdats}, we have $\lambda^t_2=\lambda^3_2=2$ and $\lambda^t_1=\lambda^3_1=2$. Choosing $a_{1,2}=1, a_{1,1}=2$ and $a_{2,1}=1$, we obtain $\mathcal{R}_1=\{(12,1),(13,1),(23,1),(1,1),(2,1),(3,1),(1,2),(2,2),\\(3,2)\}$ with $|\mathcal{R}_1|=\sum\limits_{s=1}^{2}a_{s,1}\binom{3}{s}=9$ and $\mathcal{R}_2=\{(1,1),(2,1),(3,1)\}$ with $|\mathcal{R}_2|=\sum\limits_{s=1}^{1}a_{s,2}\binom{3}{s}=3$. Corresponding to $\mathcal{R}_1$, the column index set is $\mathcal{C}_1=\{14,15,16,17,18,24,25,26,27,28,34,35,36,37,38\}$ and for $\mathcal{R}_2$, we have $\mathcal{C}_2=\{12,13,23\}$. 
 	
 	 Due to page-width constraints, we display the PDA $B_1$ in two parts. Specifically, we write $B_1=[B_{1,1}\;\;B_{1,2}]$, where $B_{1,1}$ contains the first $8$ columns of $B_1$ and $B_{1,2}$ contains the remaining 7 columns as shown in Fig. \ref{B1}. This partition is purely for presentation purposes. The PDA $B_2$ is given below as 
 \begin{figure*}[t]
 	\begingroup
 	\setlength\arraycolsep{3pt}
 	\begin{align*}
 		\setlength\arraycolsep{3pt}
 		&B_{1,1}=\begin{bNiceMatrix}[first-col,first-row]
 			{} &14 &15 &16&17&18&24&25&26\\
 			(12,1) &\textasteriskcentered &\textasteriskcentered &\textasteriskcentered &\textasteriskcentered &\textasteriskcentered &\textasteriskcentered &\textasteriskcentered &\textasteriskcentered \\
 			(13,1) &\textasteriskcentered &\textasteriskcentered &\textasteriskcentered &\textasteriskcentered &\textasteriskcentered &(123,1)_1 &(123,1)_2 &(123,1)_3 \\
 			(23,1) &(123,1)_1 &(123,1)_2 &(123,1)_3 &(123,1)_4 &(123,1)_5 &\textasteriskcentered &\textasteriskcentered &\textasteriskcentered \\
 			(1,1) &\textasteriskcentered &\textasteriskcentered &\textasteriskcentered &\textasteriskcentered &\textasteriskcentered &(12,1)_1 &(12,1)_2 &(12,1)_3 \\
 			(2,1)&(12,1)_1 &(12,1)_2 &(12,1)_3 &(12,1)_4 &(12,1)_5 &\textasteriskcentered &\textasteriskcentered &\textasteriskcentered \\
 			(3,1)&(13,1)_1 &(13,1)_2 &(13,1)_3 &(13,1)_4 &(13,1)_5 &(23,1)_1 &(23,1)_2 &(23,1)_3 \\
 			(1,2) &\textasteriskcentered &\textasteriskcentered &\textasteriskcentered &\textasteriskcentered &\textasteriskcentered &(12,2)_1 &(12,2)_2 &(12,2)_3 \\
 			(2,2)&(12,2)_1 &(12,2)_2 &(12,2)_3 &(12,2)_4 &(12,2)_5 &\textasteriskcentered &\textasteriskcentered &\textasteriskcentered \\
 			(3,2)&(13,2)_1 &(13,2)_2 &(13,2)_3 &(13,2)_4 &(13,2)_5 &(23,2)_1 &(23,2)_2 &(23,2)_3 
 		\end{bNiceMatrix},\\&B_{1,2}=\begin{bNiceMatrix}[first-col,first-row]{} &27&28 &34&35&36&37&38\\
 			(12,1) &\textasteriskcentered &\textasteriskcentered&(123,1)_1 &(123,1)_2 &(123,1)_3 &(123,1)_4 &(123,1)_5\\
 			(13,1) &(123,1)_4 &(123,1)_5&\textasteriskcentered &\textasteriskcentered &\textasteriskcentered &\textasteriskcentered &\textasteriskcentered\\
 			(23,1)&\textasteriskcentered &\textasteriskcentered &\textasteriskcentered &\textasteriskcentered &\textasteriskcentered &\textasteriskcentered &\textasteriskcentered\\
 			(1,1) &(12,1)_4 &(12,1)_5&(13,1)_1 &(13,1)_2 &(13,1)_3 &(13,1)_4 &(13,1)_5\\
 			(2,1) &\textasteriskcentered &\textasteriskcentered&(23,1)_1 &(23,1)_2 &(23,1)_3 &(23,1)_4 &(23,1)_5\\
 			(3,1) &(23,1)_4 &(23,1)_5 &\textasteriskcentered &\textasteriskcentered &\textasteriskcentered &\textasteriskcentered &\textasteriskcentered \\
 			(1,2) &(12,2)_4 &(12,2)_5&(13,2)_1 &(13,2)_2 &(13,2)_3 &(13,2)_4 &(13,2)_5\\
 			(2,2) &\textasteriskcentered &\textasteriskcentered&(23,2)_1 &(23,2)_2 &(23,2)_3 &(23,2)_4 &(23,2)_5\\
 			(3,2) &(23,2)_4 &(23,2)_5 &\textasteriskcentered &\textasteriskcentered &\textasteriskcentered &\textasteriskcentered &\textasteriskcentered 
 		\end{bNiceMatrix}
 	\end{align*}\endgroup
 	\caption{$B_{1,1}$ and $B_{1,2}$}
 	\label{B1}
 \end{figure*}
 	\begingroup
 \setlength\arraycolsep{3pt}
 \begin{align*}
 	B_2=\begin{bNiceMatrix}[first-col,first-row]
 		{} &12 &13 &23\\
 		(1,1) &\textasteriskcentered &\textasteriskcentered &(123,1)\\
 		(2,1) &\textasteriskcentered &(123,1) &\textasteriskcentered\\
 		(3,1) &(123,1) &\textasteriskcentered &\textasteriskcentered\\
 	\end{bNiceMatrix}
 \end{align*}\endgroup
 Observe that $F^{\prime}_2=3,Z^{\prime}_2=2,F^{\prime}_1=9$ and $Z^{\prime}_1=4$. Now, consider a subset $I\subset X=[8]$ of size $t=3$, say $I=\{2,4,6\}$. Thus, we have $\tau_1(I)=\{12,23,25,27,28,14,34,45,47,48,16,36,56,67,68\}$ and $\tau_2(I)=\{24,26,46\}$ which are isomorphic to $\mathcal{C}_1=\{14,15,16,17,18,24,25,26,27,28,34,35,36,37,38\}$ and $\mathcal{C}_2=\{12,13,23\}$, respectively, under the relabeling of the elements of $I=\{2,4,6\}$ to $[3]$.

 Let $A^I_H(i)$ denote the $i^{th}$ block in $\mathcal{A}$ containing all the elements in $H$ but none of the elements in $I\setminus H$, with the blocks arranged lexicographically. From Corollary \ref{lambdats}, there are $\lambda^t_s$ such blocks for each $H\subset I$, $|I|=t$ and $|H|=s<t$. For $I=\{2,4,6\}$ and for every $H\in\binom{I}{s}$, we have $\zeta_{j}(I)=\bigcup\limits_{s=1}^{t-j} \{A^I_H(i):H\in\binom{I}{s},i\in[a_{s,j}]\}$. Hence, for $s=1$ and $H\in\binom{I}{1}$, we have $A^I_{2}=\{1278,2358\},A^I_{4}=\{1458,3478\}$ and $A^I_{6}=\{1368,5678\}.$ Further, for $s=2$ and $H\in\binom{I}{2}$, we have $A^I_{24}=\{1234,2457\},A^I_{26}=\{1256,2367\}$ and $A^I_{46}=\{1467,3456\}$. Therefore, we have $\zeta_1(I)=\{1278,2358,1458,3478,1368,5678,1234,1256,1467\}$ and $\zeta_2(I)=\{1278,1458,1368\}$. Under the relabeling of the elements of $I=\{2,4,6\}$ to $[3]$, the sets $\zeta_1(I)$ and $\zeta_2(I)$ are isomorphic to $\mathcal{R}_1=\{(1,1),(1,2),(2,1),(2,2),(3,1),(3,2),(12,1),(13,1),\\(23,1)\}$ and $\mathcal{R}_2=\{(1,1),(2,1),(3,1)\},$ respectively. 
 
 Since the elements of the set $\zeta_1(I)$ and $\zeta_2(I)$ are isomorphic to $\mathcal{R}_1$ and $\mathcal{R}_2$, respectively, and the elements of $\tau_1(I)$ and $\tau_2(I)$ are isomorphic to $\mathcal{C}_1$ and $\mathcal{C}_2$, respectively, the generalized HpPDA condition $[P]_{\zeta_j(I)\times\tau_j(I)}\overset{\textasteriskcentered}{=}B_j,\;\forall j\in[r]$ is satisfied.
 
 We now explain how the hotplug coded caching scheme is constructed.
 
 \textit{Placement Phase}: Recall that $Z$ is the total number of coded subfiles of each file available to user $U$ through the caches it accesses. From Theorem \ref{lambdas}, in a $t$-$(v,k,\lambda)$ design, any subset $Y\subset X$ with $|Y|=s$ is contained in exactly $\lambda_s$ blocks. Since $P(A,U)=\textasteriskcentered$ if $U\cap A\not=\emptyset$, the number of rows in which a column $U$ contains a $\textasteriskcentered$ symbol is given as $Z=\sum\limits_{i=1}^{r}(-1)^{i+1}\binom{r}{i}\lambda_i$. For the given system, there are $Z=\sum\limits_{i=1}^{2}(-1)^{i+1}\binom{2}{i}\lambda_i=2\lambda_1-\lambda_2=14-3=11$. 
 
 Moreover, $F^{\prime}_r=F^{\prime}_2=3$ and $Z^{\prime}_r=Z^{\prime}_2=2$. Hence, each file $W_n$ is partitioned into $F^{\prime}_2-Z^{\prime}_2+Z=3-2+11=12$ subfiles as $W_{n}=\{W_{n,i}:i\in[12]\},\;\forall n\in[N].$

Since there are $b=\lambda_0=14$ blocks, each subfile of every file is coded into $14$ coded subfiles using a $[14,12]$ MDS code, indexed as $W^c_{n,A}:A\in\mathcal{A},\;\forall n\in[N]$. These coded subfiles are placed into the caches using the array $P_c$ as $Z_i=\{W^c_{n,A}:i\in A\}$. Since each element occurs in exactly $\lambda_1=7$ blocks, each cache stores $7$ coded subfiles, giving $\frac{M}{N}=\frac{\lambda_1}{F^{\prime}_2-Z^{\prime}_2+Z}=\frac{7}{12}$.
 
 \textit{Delivery Phase}: Using the PDA $B_2$ and the isomorphism between $(\tau_2(I),\zeta_2(I))$ and $(\mathcal{C}_2,\mathcal{R}_2)$, the server constructs the array $\overline{P}$ as given in Line \ref{deliveryP} of Algorithm \ref{Algo1}. Using this array, the server forms the coded multicast transmission: \begin{enumerate}\item $X_1=W^c_{d_{24},1368}\oplus W^c_{d_{26},1458}\oplus W^c_{d_{46},1278}$. \end{enumerate}
 
 Observe that the user $24$ has access to coded subfiles $W^c_{n,1458}$ and $W^c_{n,1278}$ and is able to decode the transmission. The same logic holds for other users in the transmission. 
 
 Thus, each user in the set $\{24,26,46\}$ receives one coded subfile from the transmission and already has access to $11$ subfiles from the caches it connects to, obtaining $12$ subfiles. Using the property of the $[14,12]$ MDS code, each of these users will be able to reconstruct its desired files.
 
 Next, using the PDA $B_1$ and the isomorphism between $(\tau_1(I),\zeta_1(I))$ and $(\mathcal{C}_1,\mathcal{R}_1)$, the server again constructs the array $\overline{P}$ as given in Line \ref{deliveryP} of Algorithm \ref{Algo1}. The resulting multicast transmissions are: 
 \begin{enumerate}\item $X_2=W^c_{d_{12},1467}\oplus W^c_{d_{14},1256}\oplus W^c_{d_{16},1234}$
 	\item $X_3=W^c_{d_{23},1467}\oplus W^c_{d_{34},1256}\oplus W^c_{d_{36},1234}$
 	\item $X_4=W^c_{d_{25},1467}\oplus W^c_{d_{45},1256}\oplus W^c_{d_{56},1234}$
 	\item $X_5=W^c_{d_{27},1467}\oplus W^c_{d_{47},1256}\oplus W^c_{d_{67},1234}$
 	\item $X_6=W^c_{d_{28},1467}\oplus W^c_{d_{48},1256}\oplus W^c_{d_{68},1234}$
 	\item $X_7=W^c_{d_{12},1458}\oplus W^c_{d_{14},1278}$
 	\item $X_8=W^c_{d_{12},3478}\oplus W^c_{d_{14},2358}$
 	\item $X_9=W^c_{d_{23},1458}\oplus W^c_{d_{34},1278}$
 	\item $X_{10}=W^c_{d_{23},3478}\oplus W^c_{d_{34},2358}$
 	\item $X_{11}=W^c_{d_{25},1458}\oplus W^c_{d_{45},1278}$
 	\item $X_{12}=W^c_{d_{25},3478}\oplus W^c_{d_{45},2358}$
 	\item $X_{13}=W^c_{d_{27},1458}\oplus W^c_{d_{47},1278}$
 	\item $X_{14}=W^c_{d_{27},3478}\oplus W^c_{d_{47},2358}$
 	\item $X_{15}=W^c_{d_{28},1458}\oplus W^c_{d_{48},1278}$
 	\item $X_{16}=W^c_{d_{28},3478}\oplus W^c_{d_{48},2358}$
 	\item $X_{17}=W^c_{d_{12},1368}\oplus W^c_{d_{16},1278}$
 	\item $X_{18}=W^c_{d_{12},5678}\oplus W^c_{d_{16},2358}$
 	\item $X_{19}=W^c_{d_{23},1368}\oplus W^c_{d_{36},1278}$
 	\item $X_{20}=W^c_{d_{23},5678}\oplus W^c_{d_{36},2358}$
 	\item $X_{21}=W^c_{d_{25},1368}\oplus W^c_{d_{56},1278}$
 	\item $X_{22}=W^c_{d_{25},5678}\oplus W^c_{d_{56},2358}$
 	\item $X_{23}=W^c_{d_{27},1368}\oplus W^c_{d_{67},1278}$
 	\item $X_{24}=W^c_{d_{27},5678}\oplus W^c_{d_{67},2358}$
 	\item $X_{25}=W^c_{d_{28},1368}\oplus W^c_{d_{68},1278}$
 	\item $X_{26}=W^c_{d_{28},5678}\oplus W^c_{d_{68},2358}$
 	\item $X_{27}=W^c_{d_{14},1368}\oplus W^c_{d_{16},1458}$
 	\item $X_{28}=W^c_{d_{14},5678}\oplus W^c_{d_{16},3478}$
 	\item $X_{29}=W^c_{d_{34},1368}\oplus W^c_{d_{36},1458}$
 	\item $X_{30}=W^c_{d_{34},5678}\oplus W^c_{d_{36},3478}$
 	\item $X_{31}=W^c_{d_{45},1368}\oplus W^c_{d_{56},1458}$
 	\item $X_{32}=W^c_{d_{45},5678}\oplus W^c_{d_{56},3478}$
 	\item $X_{33}=W^c_{d_{47},1368}\oplus W^c_{d_{67},1458}$
 	\item $X_{34}=W^c_{d_{47},5678}\oplus W^c_{d_{67},3478}$
 	\item $X_{35}=W^c_{d_{48},1368}\oplus W^c_{d_{68},1458}$ 
 	\item $X_{36}=W^c_{d_{48},5678}\oplus W^c_{d_{68},3478}$
 \end{enumerate}
 
 Consider the transmission $X_2=W^c_{d_{12},1467}\oplus W^c_{d_{14},1256}\oplus W^c_{d_{16},1234}$. Observe that user $12$ can only access the cache contents of cache $2$. However, the coded subfiles $W^c_{n,1256}$ and $W^c_{n,1234}$ are in cache $2$. Hence, user $12$ is able to decode the transmission. Similarly, for the transmission $X_7=W^c_{d_{12},1458}\oplus W^c_{d_{14},1278}$, user $14$ can access the cache $4$, which stores the coded subfile $W^c_{n,1458}$, enabling user $14$ to decode the transmission. A similar argument holds for all other involved users.
 
 Each user in this group receives $5$ coded subfiles from the above transmissions and has access to $7$ coded subfiles via the online cache it connects to, resulting in $12$ coded subfiles. Thus, each user is able to reconstruct its desired file.

 \textit{Rate}: Since the PDA $B_2$ generates $1$ transmission and the PDA $B_1$ generates $5\times(1+3\times2)=35$ transmissions, the total number of server transmissions is $36$. Further, each file is partitioned into $12$ subfiles, leading to the rate $R=\frac{36}{12}=3$.
 \end{example}We now prove that the array $B_j$ is a PDA, $j\in[r]$.
 \begin{thm}
 	\label{pdaproof}
 	For $j\in[r],s\in[t-j]$ and $0\leq a_{s,j}\leq \lambda^t_s$, the array $B_j$ defined in \eqref{Bjdefn} is a $[K^{\prime}_j,F^{\prime}_j,Z^{\prime}_j,S_j]$ PDA where 
 	\begin{align*}
 		&K^{\prime}_j=\binom{t}{j}\binom{v-t}{r-j}, F^{\prime}_j=\sum\limits_{s=1}^{t-j} a_{s,j}\binom{t}{s},\\&Z^{\prime}_j=\sum\limits_{s=1}^{t-j} a_{s,j}\left(\binom{t}{s}-\binom{t-j}{s}\right),\\& S_j=\sum\limits_{s=1}^{t-j} a_{s,j}\binom{t}{s+j}\binom{v-t}{r-j}.
 	\end{align*}
 	\end{thm}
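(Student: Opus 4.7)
\textbf{Proof plan for Theorem \ref{pdaproof}.}

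The plan is to verify the four defining properties of a PDA (dimensions together with conditions C1, C2, C3 of Definition \ref{pdadefn}) one at a time, using the combinatorial structure of $\mathcal{R}_j$ and $\mathcal{C}_j$. The dimensions are immediate: $|\mathcal{C}_j|=\binom{t}{j}\binom{v-t}{r-j}=K^{\prime}_j$ by choosing the $j$ elements of $U\cap[t]$ and the $r-j$ elements of $U\setminus[t]$ independently, while $|\mathcal{R}_j|=\sum_{s=1}^{t-j}a_{s,j}\binom{t}{s}=F^{\prime}_j$ by counting $(Y,i)$ pairs.

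For C1, I would fix a column $U$ with $W=U\cap[t]$ of size $j$ and count, by complementary counting, the rows $(Y,i)\in\mathcal{R}_j$ for which $Y\cap W\neq\emptyset$. For each $s\in[t-j]$ the number of $Y\in\binom{[t]}{s}$ that meet $W$ equals $\binom{t}{s}-\binom{t-j}{s}$, and each such $Y$ contributes $a_{s,j}$ rows, giving $Z^{\prime}_j=\sum_{s=1}^{t-j}a_{s,j}\bigl(\binom{t}{s}-\binom{t-j}{s}\bigr)$ stars in every column. For C2, I would show that any candidate label $(T,i)_n$, where $T\subseteq[t]$ has size $s+j$, $i\in[a_{s,j}]$ and $n\in[\binom{v-t}{r-j}]$, occurs: pick any $Y\in\binom{T}{s}$, let $W=T\setminus Y$ (of size $j$), and let $U$ be the column whose intersection with $[t]$ equals $W$ and whose $U\setminus[t]$ is the $n$-th $(r-j)$-subset of $[v]\setminus[t]$ in the left-to-right ordering; the entry $B_j((Y,i),U)$ is exactly $(T,i)_n$ by definition. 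This simultaneously counts $S_j$: the distinct labels are in bijection with triples $(T,i,n)$ as above, yielding $S_j=\sum_{s=1}^{t-j}a_{s,j}\binom{t}{s+j}\binom{v-t}{r-j}$.

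The main work is C3. Suppose two distinct non-star entries carry the same label $(T,i)_n$, at positions $((Y_1,i),U_1)$ and $((Y_2,i),U_2)$. The key observation is that the label determines $|T|=s+j$, the set $T$, the pair $i$, and the suffix $U\setminus[t]$ (since $n$ ranges over the enumeration of $(r-j)$-subsets of $[v]\setminus[t]$). Hence $Y_1,Y_2\in\binom{T}{s}$, and $U_1\cap[t]=T\setminus Y_1$, $U_2\cap[t]=T\setminus Y_2$, while $U_1\setminus[t]=U_2\setminus[t]$. For the two entries to be distinct, $Y_1\neq Y_2$, which forces $U_1\cap[t]\neq U_2\cap[t]$, so the rows and the columns are distinct, giving condition (a). For condition (b), the cross entry $B_j((Y_1,i),U_2)$ has $Y_1\cap(U_2\cap[t])=Y_1\cap(T\setminus Y_2)=Y_1\setminus Y_2$, which is nonempty because $Y_1$ and $Y_2$ are distinct subsets of $T$ of the same size $s$; hence this entry is $\textasteriskcentered$, and by symmetry so is $B_j((Y_2,i),U_1)$.

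The only delicate step is C3, because one must make sure that the multi-index label $(T,i)_n$ really encodes enough information to pin down $s$, $T$, $i$, and $U\setminus[t]$ unambiguously; once that is established, the PDA condition reduces to the elementary fact that two distinct equal-size subsets of $T$ cannot contain each other, which is exactly what makes the off-diagonal entries $\textasteriskcentered$. Combining the four verifications gives that $B_j$ is a $[K^{\prime}_j,F^{\prime}_j,Z^{\prime}_j,S_j]$-PDA with the claimed parameters.
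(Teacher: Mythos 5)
Your proposal is correct and follows essentially the same route as the paper's proof: complementary counting of the rows $(Y,i)$ meeting $U\cap[t]$ for $Z'_j$, counting integer labels via the unions $Y\cup\{U\cap[t]\}$ of size $s+j$ for $S_j$, and verifying C3 from the equality of unions, with the cross-entries being $\textasteriskcentered$ because $Y_1\setminus Y_2\neq\emptyset$ for distinct equal-size subsets of $T$. Your explicit remark that the subscript $n$ pins down $U\setminus[t]$ (and your separate check of C2) is slightly more careful than the paper's treatment, but the argument is the same in substance.
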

 	\begin{proof}
 		The rows of $B_j$ are indexed by the elements of the set $\mathcal{R}_j$ and its columns by $\mathcal{C}_j$. Since $|\mathcal{R}_j|=\sum\limits_{s=1}^{t-j}a_{s,j}\binom{t}{s}$ and $|\mathcal{C}_j|=\binom{t}{j}\binom{v-t}{r-j}$, we have $K^{\prime}_j=\binom{t}{j}\binom{v-t}{r-j}$ and $F^{\prime}_j=\sum\limits_{s=1}^{t-j}a_{s,j}\binom{t}{s}$. 
 		
 		By construction, we have $B_j((Y,i),U)=\textasteriskcentered$ if $\{U\cap[t]\}\cap Y\not=\emptyset$. For a given $s\in[t-j]$ and $|U\cap[t]|=j$, the number of subsets $Y$ of cardinality $s$ satisfying $Y\cap\{U\cap[t]\}=\emptyset$ is $\binom{t-j}{s}$. Hence, there will be $\sum\limits_{s=1}^{t-j} a_{s,j}\binom{t-j}{s}$ non-star entries in each column. Since there are $F^{\prime}_j$ rows, we have $Z^{\prime}_j=\sum\limits_{s=1}^{t-j}a_{s,j}\left( \binom{t}{s}-\binom{t-j}{s}\right)$. 
 		
 		Finally, we have $B_j((Y,i),U)=(Y\cup\{U\cap[t]\},i)_{n_{Y\cup\{U\cap[t]\}}}$ if $Y\cap\{U\cap[t]\}=\emptyset$. Since $|U\cap[t]|=j$ and $|Y|=s$, the cardinality of $Y\cup\{U\cap[t]\}$ is $s+j$. Thus, for a fixed $s\in[t-j]$, there are $\binom{t}{s+j}$ such subsets. Moreover, for each such subset of cardinality $s+j$, there are $\binom{v-t}{r-j}$ choices of $U$ that yield the same union. Finally, as $i\in[a_{s,j}]$, each union will occur $a_{s,j}$ times, leading to $S_j=\sum\limits_{s=1}^{t-j} a_{s,j}\binom{t}{s+j}\binom{v-t}{r-j}$ non-star entries.
 		
 		We now prove Condition C3. a. as given in Definition \ref{pdadefn}. 
 		
 		Consider two distinct integer entries $B((Y_1,i),U_1)$ and $B((Y_2,i),U_2)$ such that $B((Y_1,i),U_1)=B((Y_2,i),U_2)=(Y_1\cup\{U_1\cap[t]\},i)_{n_{Y_1\cup\{U_1\cap[t]\}}}=(Y_2\cup\{U_2\cap[t]\},i)_{n_{Y_2\cup\{U_2\cap[t]\}}}$. This implies that \begin{align}\label{pdaproofconditionc3a}Y_1\cup\{U_1\cap[t]\}=Y_2\cup\{U_2\cap[t]\}.\end{align}Because of this, if $Y_1=Y_2$, then $U_1=U_2$ or if $U_1=U_2$, then $Y_1=Y_2$. Both cases contradict the assumption that the two entries are distinct. Thus, we have $(Y_1,i)\not=(Y_2,i)$ and $U_1\not=U_2$, proving the condition.
 		
 		We now prove the Condition C3. b. given in Definition \ref{pdadefn}. 
 		
 		Consider an element $u_1\in\{U_1\cap[t]\}\setminus\{U_2\cap[t]\}$. We know that $u_1\not\in Y_1$ and by construction, $u_1\not\in\{U_2\cap[t]\}$. From Equation \eqref{pdaproofconditionc3a}, it follows that $u_1\in Y_2$, which implies that $Y_2\cap\{U_1\cap[t]\}\not=\emptyset$, leading to $B((Y_2,i),U_1)=\textasteriskcentered$. Similarly, it can be shown that $B((Y_1,i),U_2)=\textasteriskcentered$, proving the condition.
 	\end{proof}
 	Next, we present the proof that there exists a pair $(\tau_j(I),\zeta_j(I))$ such that $ 		[P]_{\zeta_j(I)\times\tau_j(I)}\overset{\textasteriskcentered}{=}B_j$.
 	\begin{thm}
 		For $j\in[r],s\in[t-j]$, and $0\leq a_{s,j}\leq \lambda^t_s$, the tuple $(P,\{B_j\}_{j=1}^{r})$ forms a $(C,C^{\prime},r,F,Z_c,Z,\{F^{\prime}_j,Z^{\prime}_j,S_j\}_{j=1}^{r})$-generalized HpPDA, with the parameters $\{F^{\prime}_j,Z^{\prime}_j,S_j\}_{j=1}^{r}$ defined in Theorem \ref{pdaproof} and $C=v,C^{\prime}=t, F=b, Z_c=\lambda_1,$ and $Z=\sum\limits_{i=1}^{r}(-1)^{i+1}\binom{r}{i}\lambda_i$.
 	\end{thm}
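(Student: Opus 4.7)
The plan is to verify in turn each of the four ingredients required by Definition \ref{CHpPDAdefn} for the arrays defined in \eqref{pcdefn}, \eqref{pdefn} and \eqref{Bjdefn}. First I would confirm the global parameters. The row count is $F=|\mathcal{A}|=\lambda_0$ and the column count of $P_c$ is $|X|=v=C$ by construction. A column of $P_c$ is indexed by a point $i\in X$, and Theorem \ref{lambdas} with $s=1$ gives exactly $\lambda_1$ stars in that column, so $Z_c=\lambda_1$. For a column of $P$ indexed by $U\in\binom{X}{r}$, the star count equals the number of blocks that meet $U$; inclusion–exclusion over the $r$ points of $U$, with each $\lambda_i$ supplied by Theorem \ref{lambdas}, yields $\sum_{i=1}^{r}(-1)^{i+1}\binom{r}{i}\lambda_i=Z$. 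Theorem \ref{pdaproof} already establishes that each $B_j$ is a $(K'_j,F'_j,Z'_j,S_j)$-PDA with the claimed parameters, so that ingredient can simply be cited.

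The substantive step is producing, for each $I\in\binom{[v]}{t}$ and each $j\in[r]$, a row subset $\zeta_j(I)\in\binom{[F]}{F'_j}$ realising $[P]_{\zeta_j(I)\times\tau_j(I)}\overset{\textasteriskcentered}{=}B_j$. I would fix any bijection $\pi:X\to X$ with $\pi(I)=[t]$ (so $\pi(X\setminus I)=X\setminus[t]$). For each $s\in[t-j]$ and $Y\in\binom{[t]}{s}$, Corollary \ref{lambdats} supplies $\lambda^t_s\ge a_{s,j}$ blocks containing $\pi^{-1}(Y)\subseteq I$ and avoiding $I\setminus\pi^{-1}(Y)$; ordering them lexicographically, let $A^I_{\pi^{-1}(Y)}(i)$ denote the $i$-th such block, and set
\[
\zeta_j(I)\;=\;\bigcup_{s=1}^{t-j}\bigl\{\,A^I_{\pi^{-1}(Y)}(i):Y\in\tbinom{[t]}{s},\;i\in[a_{s,j}]\,\bigr\}.
\]
Blocks chosen for different in-$I$ parts $H=\pi^{-1}(Y)$ have distinct intersections with $I$ and are therefore pairwise distinct, giving $|\zeta_j(I)|=\sum_{s=1}^{t-j}a_{s,j}\binom{t}{s}=F'_j$, while $|\tau_j(I)|=\binom{t}{j}\binom{v-t}{r-j}=K'_j$.

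The final step is the entrywise verification under the identifications $(Y,i)\leftrightarrow A^I_{\pi^{-1}(Y)}(i)$ on rows and $U\leftrightarrow\pi(U)$ on columns. A position is a star in $B_j$ iff $Y\cap(\pi(U)\cap[t])\neq\emptyset$, which after pulling back through $\pi$ is the condition $\pi^{-1}(Y)\cap(U\cap I)\neq\emptyset$. Since the chosen block satisfies $A^I_{\pi^{-1}(Y)}(i)\cap I=\pi^{-1}(Y)$ by construction, this is exactly the condition that $A^I_{\pi^{-1}(Y)}(i)$ meets $U$ through an element of $I$, which matches the online-cache access pattern that Algorithm \ref{Algo1} consumes for decoding.

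I expect the main obstacle to be reconciling this clean online-access picture with the star rule of $P$ itself, which (by $\beta_{f,U}=\textasteriskcentered$ whenever some $c\in U$ satisfies $\alpha_{f,c}=\textasteriskcentered$) places a star whenever $A\cap U\neq\emptyset$ through \emph{any} cache in $U$. One must check that for $A\in\zeta_j(I)$ and $U\in\tau_j(I)$ the only stars relevant to the matching $\overset{\textasteriskcentered}{=}B_j$ are those coming from $A\cap U\cap I$ — equivalently, read $\overset{\textasteriskcentered}{=}$ as identifying the online-access sub-pattern of $[P]_{\zeta_j(I)\times\tau_j(I)}$ with $B_j$. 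Once this interpretation is pinned down, the bijections above send every star of $B_j$ to a star of $[P]_{\zeta_j(I)\times\tau_j(I)}$ and vice versa, and the tuple $(P_c,P,\{B_j\}_{j=1}^{r})$ satisfies all conditions of Definition \ref{CHpPDAdefn} with the stated parameters.
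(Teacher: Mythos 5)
Your construction is the paper's own argument step for step: the same $\zeta_j(I)=\bigcup_{s=1}^{t-j}\{A^I_H(i):H\in\binom{I}{s},i\in[a_{s,j}]\}$, the same lexicographic ordering from Corollary \ref{lambdats}, and the same pair of bijections (the paper writes $f_1:I\to[t]$ and $f_2:A^I_H(i)\mapsto(f_1(H),i)$ where you write $\pi$), together with citing Theorem \ref{pdaproof} for the $B_j$ parameters and inclusion--exclusion for $Z$. The one point where you go beyond the paper is the ``main obstacle'' you flag at the end, and you are right to flag it: it is a genuine gap, and the paper's proof does not close it either --- it simply asserts that, under the isomorphism, the conditions $P(A,U)=\textasteriskcentered\Leftrightarrow A\cap U\neq\emptyset$ and $B_j((Y,i),U)=\textasteriskcentered\Leftrightarrow Y\cap\{U\cap[t]\}\neq\emptyset$ ``are equivalent.'' Only one implication is true. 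Writing $H=f_1^{-1}(Y)$, Corollary \ref{lambdats} gives $A^I_H(i)\cap I=H$, so the $B_j$-star condition pulls back to $A\cap U\cap I\neq\emptyset$, which implies $A\cap U\neq\emptyset$; but a block in $\zeta_j(I)$ has $k-|H|$ points outside $I$ and a user in $\tau_j(I)$ has $r-j$ points outside $I$, so $A\cap U$ can be nonempty through an offline cache while $A\cap U\cap I=\emptyset$. The paper's own Example 2 exhibits this: for $I=\{2,4,6\}$, the block $1278=A^I_{\{2\}}(1)\in\zeta_1(I)$ and the user $47\in\tau_1(I)$ satisfy $1278\cap 47=\{7\}\neq\emptyset$, so $P(1278,47)=\textasteriskcentered$, yet the corresponding entry of $B_1$ (row $(1,1)$, column $27$) is the integer $(12,1)_4$, which generates the transmission $X_{13}$. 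Hence $[P]_{\zeta_j(I)\times\tau_j(I)}\overset{\textasteriskcentered}{=}B_j$ fails under the stated reading of $\overset{\textasteriskcentered}{=}$ as exact coincidence of star positions whenever $r\geq 2$ and $k\geq 2$.

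The repair you gesture at is the correct one, but a complete proof must commit to it rather than leave it as an interpretation to be pinned down. The scheme only needs the single implication that every star of $B_j$ maps to a position $(A,U)$ with $A\cap U\cap I\neq\emptyset$: that is what allows a user to cancel the unwanted summands in Line \ref{codedtransmissions} of Algorithm \ref{Algo1} using \emph{online} caches, and the extra stars of $P$ correspond precisely to coded subfiles held only at offline caches, which the user genuinely needs transmitted (so they must remain integers in $B_j$). Concretely, either Definition \ref{CHpPDAdefn} should weaken $\overset{\textasteriskcentered}{=}$ to one-sided containment of star positions, or the condition should be stated against the online-restricted array $P^I$ with $P^I(A,U)=\textasteriskcentered$ iff $A\cap U\cap I\neq\emptyset$; with either amendment your bijections give the required (true) implication immediately from $A^I_H(i)\cap I=H$, and the rest of your verification of the parameters $F$, $Z_c$, $Z$, $F'_j$, $Z'_j$, $S_j$ stands as written.
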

 	\begin{proof}
 		From Theorem \ref{pdaproof}, we know that $F^{\prime}_j=\sum\limits_{s=1}^{t-j} a_{s,j}\binom{t}{s},Z^{\prime}_j=\sum\limits_{s=1}^{t-j} a_{s,j}\left(\binom{t}{s}-\binom{t-j}{s}\right)$, and $S_j=\sum\limits_{s=1}^{t-j} a_{s,j}\binom{t}{s+j}\binom{v-t}{r-j}.$ 
 		
 		The cache placement array $P_c$, defined in \eqref{pcdefn}, is a $b\times v$ array with rows indexed by $A\in\mathcal{A}$ and columns by elements of $X$, where $P_c(A,i)=\textasteriskcentered$ if $i\in A$. From Theorem \ref{lambdas}, each element occurs in exactly $\lambda_1$ blocks and hence each column of $P_c$ has $Z_c=\lambda_1$ stars.
 		
 		The array $P$, defined in \eqref{pdefn}, is an array of dimension $b\times\binom{v}{r}$ containing $\textasteriskcentered$ and nulls, with rows being indexed by the blocks in $\mathcal{A}$ and columns by $U\in\binom{X}{r}$. From Theorem \ref{lambdas}, a set $Y,|Y|=s<t$ occurs in $\lambda_s$ blocks. Using this property and by construction of $P$, we have $Z=\sum\limits_{i=1}^{r}(-1)^{i+1}\binom{r}{i}\lambda_i$. For a user connecting to $j$ online caches, it would have access to the subfiles stored in those caches. Thus, such a user will have access to $Z_j=\sum\limits_{i=1}^{j}(-1)^{i+1}\binom{j}{i}\lambda_i$ subfiles.
 		
 		We now prove \eqref{star=condition} given in Definition \ref{CHpPDAdefn}. Fix a subset $I\subset X,|I|=t$. For any $H\subset I,|H|=s<t$, Corollary \ref{lambdats} states that there are exactly $\lambda^t_s$ blocks containing $H$ and none of the elements of $I\setminus H$. 
 		
 		We denote these blocks by the set $A^I_H=\{A^I_H(i):i\in[\lambda^t_s]\}$, where $A^I_H(i)$ is the $i^{th}$ block in this set under lexicographical ordering. 
 		
 		For $I$ and $H$ defined above and a fixed $j\in[r]$, define the set $\zeta_j(I)=\bigcup\limits_{s=1}^{t-j} \{A^I_H(i):i\in[a_{s,j}],H\in\binom{I}{s}\}$, where $0\leq a_{s,j}\leq \lambda^t_s$. This set has $a_{s,j}$ blocks for each $H$, leading to $|\zeta_j(I)|=\sum\limits_{s=1}^{t-j}a_{s,j}\binom{t}{s}$. We write the set $\zeta_j(I)$ as $\zeta_j(I)=\bigcup\limits_{s=1}^{t-j}\zeta_{j,s}(I)$, where $\zeta_{j,s}(I)=\{A^I_H(i):i\in[a_{s,j}],H\in\binom{I}{s}\}$. 
 		
 		Finally, consider the subarrays $[P]_{\zeta_{j,s}(I),\tau_j(I)}$ and $[B_j]_{\mathcal{R}_{j,s},\mathcal{C}_j}$, where $\mathcal{R}_{j,s}=\{(Y,i):Y\in\binom{[t]}{s},i\in[a_{s,j}]\}$. Both arrays have dimension $a_{s,j}\binom{t}{s}\times\binom{t}{j}\binom{C-t}{r-j}$. 
 		
 		We first show isomorphism between the column indexing of $[P]_{\zeta_{j,s}(I),\tau_j(I)}$ and $[B_j]_{\mathcal{R}_{j,s},\mathcal{C}_j}$. Observe that $\tau_j(I)$ consists of all users with $|U\cap I|=j$ and $\mathcal{C}_{j}$ contains all $r$-subsets of $X$ intersecting with $[t]$ in $j$ elements. Thus, a bijection $f_1:I\mapsto[t], f_1(I(i))=i$ exists, where $I(i)$ is the $i^{th}$ element of $I$, ordered lexicographically.
 		
 		Now, we show the isomorphism between $\mathcal{R}_{j,s}$ and $\zeta_{j,s}(I)$. The elements of $\mathcal{R}_{j,s}$ are $(Y,i), Y\in\binom{[t]}{s},i\in[a_{s,j}]$ whereas $\zeta_{j,s}(I)$ is a set of $a_{s,j}$ blocks $A^I_H(i):H\in\binom{I}{s},i\in[a_{s,j}]$. Hence, both sets contain $a_{s,j}$ elements of all possible $s$ subsets of $[t]$ and $I$, respectively. Thus, for each block $A^I_H(i)$, there is a unique $(Y,i)$ for some $Y\in\binom{[t]}{s},i\in[a_{s,j}]$. Thus, we define the bijection $f_2:A^I_H(i)\mapsto(f_1(H),i)$. The above is true for all $s\in[t-j]$.
 		
 		By construction, as $P(A,U)=\textasteriskcentered$ if $A\cap U\not=\emptyset$ and $B_j((Y,i),U)=\textasteriskcentered$ if $Y\cap\{U\cap[t]\}\not=\emptyset$, using the above isomorphism, both conditions are equivalent.
 		
 		Therefore, the condition $[P]_{\zeta_{j}(I)\times\tau_j(I)}\overset{\textasteriskcentered}{=}B_j$ is proved.
 	\end{proof}Using the generalized HpPDA constructed above, we now characterize the achievable rate-memory trade-off of the proposed combinatorial multi-access hotplug coded caching scheme based on $t$-designs in the following corollary:
 	\begin{cor}
	 		Consider a $t$-$(v, k, \lambda)$ design and the corresponding $(C, C^{\prime}, r, N)$ combinatorial multi-access hotplug coded caching model with $C = v$ and $C^{\prime} = t$. For each $j \in [r]$ and $s \in [t - j]$, let $a_{s,j}\in[\lambda^t_s]$ be such that $Y_j\leq Y_{j-1},\;\forall j\in[r]$, where $Y_j=Z_j+F^{\prime}_j-Z^{\prime}_j$. Then the generalized HpPDA describes a combinatorial multi-access hotplug coded caching scheme with cache memory $M=\frac{N\lambda_1}{\sum\limits_{i=1}^{r}(-1)^{i+1}\binom{r}{i}\lambda_i + \sum\limits_{s=1}^{t-r} a_{s,r}\binom{t-r}{s}}$ and rate $R=\frac{\sum\limits_{j=1}^{r}\sum\limits_{s=1}^{t-j}a_{s,j}\binom{t}{s+j}\binom{v-t}{r-j}}{\sum\limits_{i=1}^{r}(-1)^{i+1}\binom{r}{i}\lambda_i + \sum\limits_{s=1}^{t-r} a_{s,r}\binom{t-r}{s}}$.
 	\end{cor}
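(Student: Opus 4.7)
The plan is to invoke the generalized HpPDA construction of the preceding theorem and feed its parameters directly into the rate--memory formula established by Algorithm \ref{Algo1} in Section \ref{proposedscheme}. The preceding theorem has already verified that $(P_c, P, \{B_j\}_{j=1}^{r})$ satisfies Definition \ref{CHpPDAdefn} with $F = \lambda_0$, $Z_c = \lambda_1$, $Z = \sum_{i=1}^{r}(-1)^{i+1}\binom{r}{i}\lambda_i$, $F'_j = \sum_{s=1}^{t-j} a_{s,j}\binom{t}{s}$, $Z'_j = \sum_{s=1}^{t-j} a_{s,j}\bigl(\binom{t}{s} - \binom{t-j}{s}\bigr)$, $S_j = \sum_{s=1}^{t-j} a_{s,j}\binom{t}{s+j}\binom{v-t}{r-j}$, and $Z_j = \sum_{i=1}^{j}(-1)^{i+1}\binom{j}{i}\lambda_i$, so what remains is substitution, simplification, and a decodability check.

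First I would collapse the denominator $Z + F'_r - Z'_r$ that appears in Algorithm \ref{Algo1}. The computation $F'_r - Z'_r = \sum_{s=1}^{t-r} a_{s,r}\binom{t}{s} - \sum_{s=1}^{t-r} a_{s,r}\bigl(\binom{t}{s} - \binom{t-r}{s}\bigr) = \sum_{s=1}^{t-r} a_{s,r}\binom{t-r}{s}$ gives $Z + F'_r - Z'_r = \sum_{i=1}^{r}(-1)^{i+1}\binom{r}{i}\lambda_i + \sum_{s=1}^{t-r} a_{s,r}\binom{t-r}{s}$, which matches the denominator in the corollary. Plugging $Z_c = \lambda_1$ into $M/N = Z_c/(Z + F'_r - Z'_r)$ yields the stated memory expression, and summing the closed form of $S_j$ over $j \in [r]$ in $R = \sum_{j=1}^{r} S_j/(Z + F'_r - Z'_r)$ yields the stated rate expression.

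Finally I would verify decodability. Algorithm \ref{Algo1} requires every active user to accumulate at least $Z + F'_r - Z'_r$ coded subfiles so that the $[F,\, Z + F'_r - Z'_r]$ MDS code can be inverted. A user $U$ with $|U \cap I| = j$ receives $F'_j - Z'_j$ subfiles from the broadcast (one per non-star entry in its column of $B_j$) and $Z_j$ subfiles through its $j$ accessible online caches, for a total of $Y_j$. The hypothesis $Y_j \leq Y_{j-1}$ telescopes to $Y_1 \geq Y_2 \geq \cdots \geq Y_r$, so every user's accumulated count is at least the decoding threshold $Y_r = Z + F'_r - Z'_r$, and MDS inversion succeeds. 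The only place the argument carries content beyond a mechanical substitution is this telescoping step together with the identification $Y_r = Z + F'_r - Z'_r$; once those observations are recorded, the rate and memory expressions follow immediately from the formulas already proven in Section \ref{proposedscheme}.
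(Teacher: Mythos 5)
Your proposal is correct and follows essentially the same route as the paper: the corollary is obtained by substituting the $t$-design parameters ($Z_c=\lambda_1$, $Z=\sum_{i=1}^{r}(-1)^{i+1}\binom{r}{i}\lambda_i$, $F'_j$, $Z'_j$, $S_j$) into the general rate--memory formula of Algorithm \ref{Algo1}, simplifying $F'_r-Z'_r=\sum_{s=1}^{t-r}a_{s,r}\binom{t-r}{s}$, and using the chain $Y_r\leq\cdots\leq Y_1$ with $Y_r=Z+F'_r-Z'_r$ to certify MDS decodability for every active user. This matches the paper's treatment, which presents the corollary as an immediate consequence of the preceding generalized-HpPDA theorem together with the delivery-phase decodability condition $F'_j-Z'_j+Z_j\geq F'_r-Z'_r+Z$.
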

We now prove that under the condition that $a_{s,j}=\lambda^t_s$, for all $s\in[t-j]$ and $j\in[r]$, every active user has access to the same number of coded subfiles after the server finishes transmissions. 

This choice corresponds to the maximum number of coded subfiles that can be delivered to users connecting to $j$ online caches, since $F^{\prime}_j-Z^{\prime}_j$ is increasing in the parameters $a_{s,j}$. By choosing smaller values of $a_{s,r}$, and appropriately selecting the parameters $a_{s,j},j\in[r-1]$, it is possible to ensure that users connecting to fewer online caches have access to at least as many coded subfiles as those connecting to more online caches. 

Hence, the proposed framework admits multiple valid choices of the parameters $a_{s,j}$ that preserve correctness.
\begin{claim}
	\label{as=lambdatsclaim}
 		Consider a $t$-$(v,k,\lambda)$ design. For the proposed construction, let $Y_j=Z_j+F^{\prime}_j-Z^{\prime}_j,\;\forall j\in[r]$. Then, under the condition $a_{s,j}=\lambda^t_s,s\in[t-j],j\in[r]$, $Y_j$ is invariant with respect to $j$, that is, $Y_j=Y_{j-1},\forall j\in[r]$.
 	\end{claim}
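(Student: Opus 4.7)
The plan is to give a single combinatorial interpretation of $Y_j$ that does not depend on $j$. Specifically, under the choice $a_{s,j} = \lambda^t_s$, I expect $Y_j$ to equal the total number of blocks of the $t$-$(v,k,\lambda)$ design that meet the online cache set $I$ in at least one point. Because $|I| = t$ is fixed, this count has no $j$-dependence, and $Y_j = Y_{j-1}$ follows immediately for every $j \in [r]$.

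The first step is to identify $Z_j$ with a block count via inclusion--exclusion. For an active user $U$ with $|U \cap I| = j$, Theorem \ref{lambdas} says that any $i$-subset of $U \cap I$ lies in exactly $\lambda_i$ blocks; applying inclusion--exclusion on the $j$-set $U \cap I$ gives
\begin{align*}
Z_j \;=\; \sum_{i=1}^{j}(-1)^{i+1}\binom{j}{i}\lambda_i \;=\; \big|\{A \in \mathcal{A} : A \cap (U\cap I) \neq \emptyset\}\big|.
\end{align*}
The second step is to rewrite $F'_j - Z'_j$ using Theorem \ref{pdaproof}, which yields $F'_j - Z'_j = \sum_{s=1}^{t-j} a_{s,j}\binom{t-j}{s}$; substituting $a_{s,j} = \lambda^t_s$ and observing that $|I\setminus U| = t-j$, the factor $\binom{t-j}{s}$ enumerates the $s$-subsets $H \subseteq I\setminus U$, while Corollary \ref{lambdats} assigns exactly $\lambda^t_s$ blocks to each such $H$ with $A\cap I = H$. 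Hence this sum equals the number of blocks $A$ satisfying $A\cap I \neq \emptyset$ and $A\cap(U\cap I) = \emptyset$. The third step is to observe that the two block families counted by $Z_j$ and by $F'_j - Z'_j$ are disjoint and together exhaust $\{A \in \mathcal{A} : A \cap I \neq \emptyset\}$; a final inclusion--exclusion on the $t$-set $I$ evaluates the size of this union to $\sum_{i=1}^{t}(-1)^{i+1}\binom{t}{i}\lambda_i$, which is manifestly independent of $j$.

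The main (and essentially only) delicate point is matching the index range $s \in [t-j]$ in $F'_j - Z'_j$ with the nonempty subsets $H \subseteq I \setminus U$, and confirming via Corollary \ref{lambdats} that each such $H$ contributes exactly $\lambda^t_s$ blocks with $A \cap I = H$, so that the two inclusion--exclusion counts partition the set of blocks meeting $I$ without overlap or omission. Once this identification is fixed, the claim reduces to bookkeeping and requires no further calculation.
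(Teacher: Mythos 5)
Your proof is correct, and it takes a genuinely different route from the paper's. The paper computes the difference $Y_j-Y_{j-1}$ directly, simplifies it with Pascal's identity down to the single identity $\sum_{i=1}^{j}(-1)^{i+1}\binom{j-1}{i-1}\lambda_i=\sum_{s=1}^{t-j+1}\lambda^t_s\binom{t-j}{s-1}$, and then establishes that identity by counting in two ways the blocks that contain a fixed point $x$ of a $j$-subset $J\subset T$ while avoiding $J\setminus\{x\}$. You instead identify $Y_j$ itself, in one shot, as the number of blocks meeting the online set $I$: by inclusion--exclusion over the $j$-set $U\cap I$ (Theorem \ref{lambdas}), $Z_j$ is the number of blocks meeting $U\cap I$; and $F^{\prime}_j-Z^{\prime}_j=\sum_{s=1}^{t-j}\lambda^t_s\binom{t-j}{s}$ counts, via Corollary \ref{lambdats}, the blocks whose intersection with $I$ is a nonempty subset of $I\setminus U$, i.e., the blocks meeting $I$ but not $U\cap I$. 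These two families are disjoint and partition $\{A\in\mathcal{A}:A\cap I\neq\emptyset\}$, whose cardinality $\sum_{i=1}^{t}(-1)^{i+1}\binom{t}{i}\lambda_i$ is manifestly free of $j$. Your argument is shorter, avoids the binomial bookkeeping of the telescoping step, and yields as a by-product the explicit common value of $Y_j$, which the paper's proof does not exhibit; the paper's version, on the other hand, isolates the reusable identity $\sum_{i}(-1)^{i+1}\binom{j-1}{i-1}\lambda_i=\sum_{s}\lambda^t_s\binom{t-j}{s-1}$ as a standalone fact. Both proofs ultimately rest on the same two counting results for $t$-designs, so the difference is one of organization rather than of underlying machinery.
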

 	\begin{proof}
 		We first calculate $Y_j-Y_{j-1}$ and then show that $Y_j-Y_{j-1}=0$. By definition,\begin{align*}
 			&Y_j-Y_{j-1}=Z_j+F^{\prime}_j-Z^{\prime}_j-(Z_{j-1}+F^{\prime}_{j-1}-Z^{\prime}_{j-1})\\
 			=&\sum\limits_{i=1}^{j}(-1)^{i+1}\binom{j}{i}\lambda_i+\sum\limits_{s=1}^{t-j}\lambda^t_s\binom{t-j}{s}-\\&\left(\sum\limits_{i=1}^{j-1}(-1)^{i+1}\binom{j-1}{i}\lambda_i+\sum\limits_{s=1}^{t-j+1}\lambda^t_s\binom{t-j+1}{s}\right)\\
 			=&\sum\limits_{i=1}^{j-1}(-1)^{i+1}\lambda_i\left(\binom{j}{i}-\binom{j-1}{i}\right) + (-1)^{j+1}\lambda_j +\\&\sum\limits_{s=1}^{t-j}\lambda^t_s \left(\binom{t-j}{s}-\binom{t-j+1}{s}\right) - \lambda^t_{t-j+1}
 			\end{align*}
 			\begin{align*}
 			=&\sum\limits_{i=1}^{j-1}(-1)^{i+1}\lambda_i\binom{j-1}{i-1} + (-1)^{j+1}\lambda_j -\\&\sum\limits_{s=1}^{t-j}\lambda^t_s\binom{t-j}{s-1}-\lambda^t_{t-j+1}\\
 			=&\sum\limits_{i=1}^{j-1}(-1)^{i+1}\lambda_i\binom{j-1}{i-1} + (-1)^{j+1}\binom{j-1}{j-1}\lambda_j -\\&\sum\limits_{s=1}^{t-j}\lambda^t_s\binom{t-j}{s-1}-\lambda^t_{t-j+1}\binom{t-j}{t-j}\\
 			=&\sum\limits_{i=1}^{j}(-1)^{i+1}\lambda_i\binom{j-1}{i-1} -\sum\limits_{s=1}^{t-j+1}\lambda^t_s\binom{t-j}{s-1}
 		\end{align*}
 		We use the Pascal's Identity $\binom{n+1}{k}=\binom{n}{k}+\binom{n}{k-1}$ to simplify the above expressions. We now prove that $\sum\limits_{i=1}^{j}(-1)^{i+1}\lambda_i\binom{j-1}{i-1} =\sum\limits_{s=1}^{t-j+1}\lambda^t_s\binom{t-j}{s-1}$ to show that $Y_j=Y_{j-1}$.
 		
 		Fix a set $T\subset X,|T|=t$, a subset $J\subset T,|J|=j$ and an element $x\in J$. Define $\mathcal{F}=\{A:\mathcal{A},x\in A,A\cap\{J\setminus\{x\}\}=\emptyset\}$. We calculate $|\mathcal{F}|$ in two different ways.
 		
 		\textit{Step 1}: For each $y\in J\setminus \{x\}$, define the set $B_y=\{A\in\mathcal{A}:\{x,y\}\in A\}.$ Then $\mathcal{F}=\{A:A\in\mathcal{A},x\in A\}\setminus\bigcup\limits_{y\in J\setminus\{x\}} B_y$. Using the principle of inclusion-exclusion, $|\mathcal{F}|=|\{A:A\in\mathcal{A},x\in A\}\setminus\bigcup\limits_{y\in J\setminus\{x\}} B_y|=\sum\limits_{\mathcal{I}\subset J\setminus\{x\}} (-1)^{|\mathcal{I}|}\bigcap\limits_{y\in \mathcal{I}} B_y|$. Hence, $|\mathcal{F}|=\sum\limits_{\mathcal{I}\subset J\setminus\{x\}} (-1)^{|\mathcal{I}|} |\{A\in\mathcal{A}:\{x\}\cup \mathcal{I}\subset A\}|.$ Let $\mathcal{I}\subset J\setminus\{x\}$ with $|\mathcal{I}|=i-1$. Thus, the set $\{x\}\cup\mathcal{I}$ is a set of cardinality $i$ and occurs in $\lambda_i$ blocks. As $|J\setminus\{x\}|=j-1$, the number of such sets $\mathcal{I}$ is $\binom{j-1}{i-1}$, which leads to $|\mathcal{F}|=\sum\limits_{i=1}^{j} (-1)^{i-1} \binom{j-1}{i-1}\lambda_i=\sum\limits_{i=1}^{j}(-1)^{i+1} \binom{j-1}{i-1}\lambda_i$.
 		
 		\textit{Step 2}: Consider a block $A\in\mathcal{F}$. Since $x\in A$, $A\cap\{J\setminus\{x\}\}=\emptyset$ and $J\subset T$, it follows that $A\cap T\subset \{x\}\cup\{T\setminus J\}$. Let $S\subset T\setminus J$ be a set such that $A\cap T=\{x\}\cup S$. Let $|A\cap T|=s$ which gives $|S|=s-1$. Since $|T\setminus J|=t-j$ and $S\subset T\setminus J$, we have $0\leq |S|\leq t-j$ which leads to $1\leq s\leq t-j+1$. For an $s$ such that $1\leq s\leq t-j+1$, there are $\binom{t-j}{s-1}$ choices of the set $S$. For each $S$, we have $A\cap T=\{x\}\cup S$. Thus, the block $A$ contains all the points in $\{x\}\cup S$ but none of the points in $T\setminus \{\{x\}\cup S\}$. From the property of the design, there are $\lambda^t_s$ such blocks. Hence, we have $|\mathcal{F}|=\sum\limits_{s=1}^{t-j+1} \lambda^t_s\binom{t-j}{s-1}$.
 		
 		From \textit{Step 1} and \textit{Step 2}, we have $\sum\limits_{i=1}^{j}(-1)^{i+1} \binom{j-1}{i-1}\lambda_i=\sum\limits_{s=1}^{t-j+1} \lambda^t_s\binom{t-j}{s-1}$, leading to $Y_j=Y_{j-1}$.
 	\end{proof}Notice that $Y_j$ denotes the number of coded subfiles that a user connecting to exactly $j$ online caches has access to after the server completes all transmissions. Under the condition $a_{s,j}=\lambda_s^t$ for all $s\in[t-j]$ and $j\in[r]$, we have $Y_j=Y_{j-1}$ for all $j\in[r]$, and hence $Y_r = Y_{r-1} = \cdots = Y_1$.
 	
 	In general, we may choose parameters $a_{s,j}<\lambda_s^t$ such that every active user has access to at least $Y_r$ coded subfiles after the transmission phase. Under such choices, the relation $Y_j \leq Y_{j-1}$ holds for all $j\in[r]$. Consequently, some users can obtain strictly more coded subfiles than the minimum required for decoding.
 	
 	Since all users require only $Y_r$ coded subfiles to reconstruct their requested files, the additional coded subfiles received by such users correspond to redundant transmissions.
 
 	 \subsection{Specialization to Existing Schemes}
 	 We now show that the proposed $t$-scheme specializes to the CRR $t$-scheme \cite{CRR} for $r=1$.
 	 \subsubsection{CRR $t$-design scheme \cite{CRR}}
 	 We consider the proposed $t$-scheme in the special case $r=1$. Recall that the array $P$ is defined as \begin{align*}P(A,U)=
 	 	\begin{cases}
 	 		\textasteriskcentered,&\text{ if } A\cap U\not=\emptyset\\
 	 		null,&\text{ if } A\cap U=\emptyset
 	 	\end{cases}
 	 \end{align*}for $U\in\binom{X}{r}$ and $A\in\mathcal{A}$. 
 	 
 	 When $r=1$, each $r$-subset indexing the columns reduces to a singleton, hence $U=\{u\}$ for $u\in X$. In this case, the above array reduces to the array $P_c$ defined in \eqref{pcdefn}. Observe that $P_c$ is identical to the array $P$ defined in \cite{CRR}.
 	 
 	 Next, consider the PDA $B_j$, for $j\in[r]$. Since $r=1$, $j$ can take only one value, which is $j=1$, leading to one PDA $B_1$. For $r=1$, the row and column index sets of $B_1$ are given as $\mathcal{R}_1=\bigcup\limits_{s=1}^{t-1} \{(Y,i):Y\in\binom{[t]}{s},i\in[a_{s,1}]\}$ and $\mathcal{C}_1=\{u:u\in X,|u\cap[t]|=1\}=\{u:u\in[t]\}=[t]$. Hence, the PDA $B_1$ is defined as
 	 \begin{align*}&B_1((Y,i),u)=
 	 	\begin{cases}
 	 		\textasteriskcentered,&\text{ if } u\in A\\
 	 		((Y\cup u),i)_{n_{Y\cup u}},&\text{ if } u\not\in A
 	 	\end{cases}
 	 \end{align*}for $(Y,i)\in\mathcal{R}_1$ and $u\in\mathcal{C}_1$. 
 	 
 	 Observe that the PDA $B_1$, together with the sets $\mathcal{R}_1$ and $\mathcal{C}_1$ coincide exactly with the PDA $B$, the row set $\mathcal{R}$ and the column set $[t]$ in \cite{CRR}. 
 	 
 	 Therefore, for $r=1$, the proposed $t$-scheme specializes to the CRR $t$-scheme given in \cite{CRR}.
\section{Numerical Comparison}
\label{numericalcomparisons}
In this section, we compare the proposed combinatorial multi-access $t$-scheme with existing coded caching schemes. Specifically, we compare the proposed scheme with the CRR $t$-scheme and the CRR MT scheme in \cite{CRR} as well as the RR scheme in \cite{RR1}. We also include the MT scheme in \cite{MT}.

In the proposed combinatorial multi-access model, each user connects to an $r$-subset of the caches. As a result, when $C^{\prime}$ caches are online during the delivery phase, the number of active users is $K_o=\sum\limits_{j=1}^{r} \binom{C^{\prime}}{j}\binom{C-C^{\prime}}{r-j}$. In contrast, the earlier hotplug coded caching models considered in \cite{CRR,RR1,MT} assume a dedicated cache structure and therefore support $K_o=C^{\prime}$ active users.

To enable a fair comparison across schemes with different numbers of active users, we normalize the rate by the number of active users $K_o$. Accordingly, in Fig. \ref{numericalcomparisonfig}, the horizontal axis represents the caching fraction $\frac{M}{N}$, while the vertical axis represents the rate per active user.

We consider a combinatorial multi-access hotplug coded caching system with $C=8$ caches such that each user connects to $r=2$ caches and $C^{\prime}=3$ caches are online during the delivery phase, with $N=18$ files. For this configuration, the proposed scheme serves $K_o=18$ active users. For the hotplug schemes in \cite{CRR,RR1,MT}, we use the corresponding dedicated caching setting with $C=8,C^{\prime}=3$ and $N=3$, consistent with their original formulations.

As shown in Fig. \ref{numericalcomparisonfig}, the proposed $t$-scheme achieves a lower rate per active user than the existing schemes over the cache memory range $\frac{M}{N}\in[0.5,0.7]$. This illustrates that, even under hotplug constraints, exploiting increased cache connectivity can lead to improved delivery performance in certain memory regimes.
\begin{figure}
	\includegraphics[width=\columnwidth]{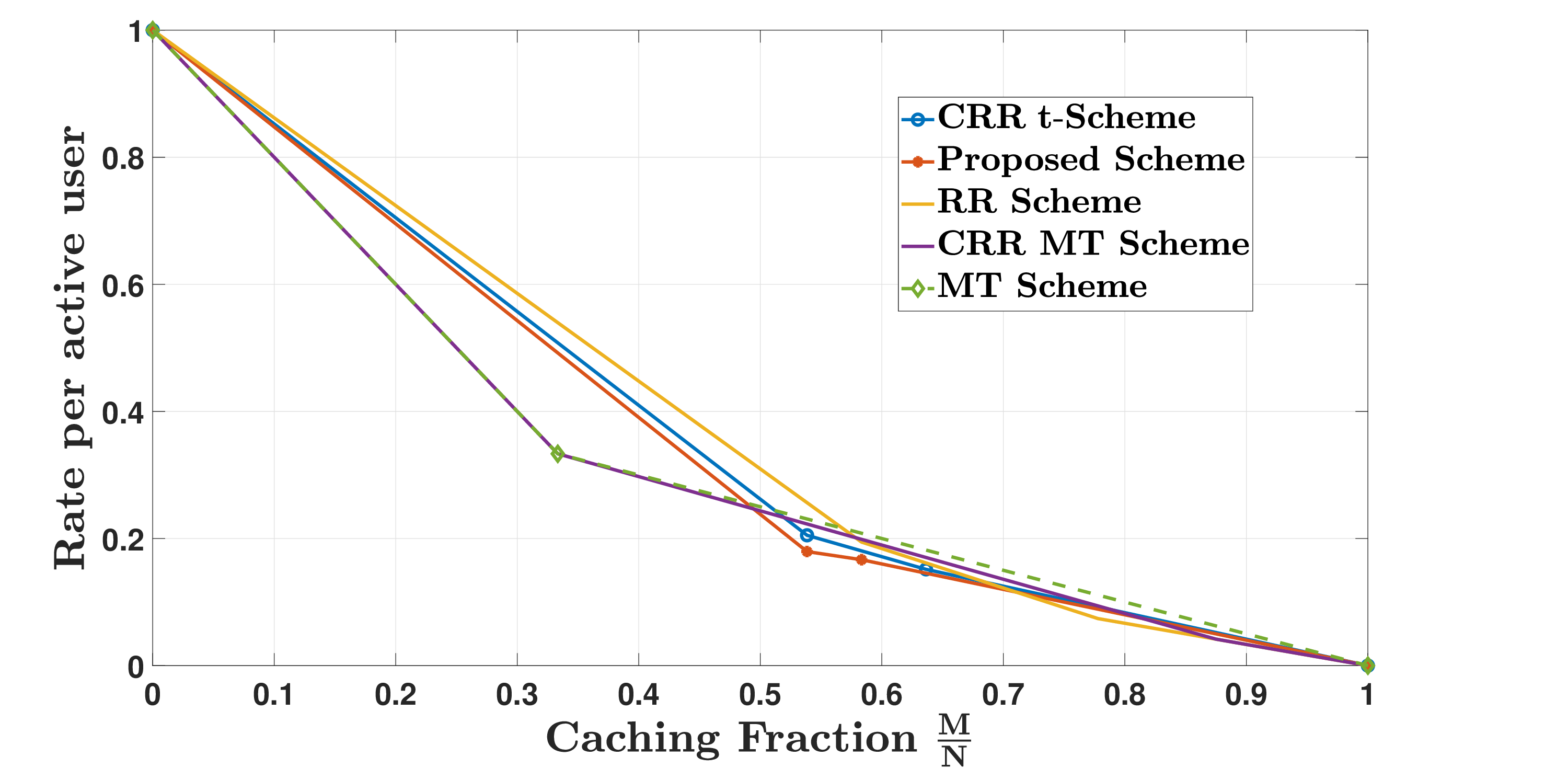}
	\caption{Rate per user \textit{vs.} Caching fraction $\frac{M}{N}$.}
	\label{numericalcomparisonfig}
\end{figure}
\section{Conclusions}
\label{conclusionsandfuturework}
In this paper, we studied coded caching in combinatorial multi-access networks under a hotplug setting, where users access multiple caches and only a subset of caches is available during the delivery phase. We generalized the Hotplug Placement Delivery Array (HpPDA) framework to the combinatorial multi-access model and used this framework to formally describe the placement and delivery procedures.

Using the generalized HpPDA framework, we proposed a $t$-design-based coded caching scheme for combinatorial multi-access hotplug networks. We characterized a class of design parameters under which every active user has access to a sufficient number of coded subfiles to decode its requested file. We further showed that appropriate parameter choices allow the elimination of multicast transmissions that are redundant for decoding, resulting in a family of achievable rate-memory trade-offs with varying subpacketization levels and cache placement structures.

We also showed that the proposed framework specializes to an existing $t$-design-based hotplug coded caching scheme in the single-access setting as a special case. Finally, we provided numerical comparisons demonstrating the performance of the proposed $t$-design scheme relative to existing hotplug coded caching schemes.

The proposed framework is the first to jointly address hotplug operation and combinatorial multi-access architectures, and enables systematic control of redundancy in multicast transmissions through design parameters.
\section*{Acknowledgment}
This work was supported partly by the Science and Engineering Research Board (SERB) of the Department of Science and Technology (DST), Government of India, through J.C Bose National Fellowship to Prof. B. Sundar Rajan.

\end{document}